  \definecolor{darkblue}{rgb}{0, 0, 0.5}
\newcommand{\doi}[1]{\href{https://doi.org/#1}{{doi: \urlstyle{rm}\nolinkurl{#1}}}} %
\newcommand{\R}{\ensuremath{\mathbb{R}}}
\newcommand{\N}{\ensuremath{\mathbb{N}}}
\newcommand{\F}{\ensuremath{\mathbb{F}}}
\newcommand{\restate}[1]{\paragraph{#1.}}
\newtheorem{theorem}{Theorem}[section]
\newtheorem{proposition}[theorem]{Proposition}
\newtheorem{lemma}[theorem]{Lemma}
\newtheorem{corollary}[theorem]{Corollary}
\theoremstyle{definition}
\newtheorem{definition}[theorem]{Definition}
\newtheorem{example}[theorem]{Example}
\def\cl{{\mathop{\operator@font cl}\nolimits}}
\newlist{compactitem}{itemize}{3}
\setlist[compactitem]{topsep=0pt,partopsep=0pt,itemsep=0pt,parsep=0pt,leftmargin=1em}
\setlist[compactitem,1]{label=--}
\setlist[compactitem,2]{label=\textbullet}
\setlist[compactitem,3]{label=*}
\newlist{compactdesc}{description}{3}
\setlist[compactdesc]{topsep=0pt,partopsep=0pt,itemsep=0pt,parsep=0pt,leftmargin=1em}
\newlist{compactenum}{enumerate}{3}
\setlist[compactenum]{topsep=0pt,partopsep=0pt,itemsep=0pt,parsep=0pt,leftmargin=1.5em}
\setlist[compactenum,1]{label=\arabic*.}
\setlist[compactenum,2]{label=(\alph*)}
\setlist[compactenum,3]{label=\roman*.}
\def\@begintheorem#1#2{\trivlist
  \item[\hskip \labelsep{\bfseries #1\ #2\ }]}
\def\@opargbegintheorem#1#2#3{\trivlist
  \item[\hskip \labelsep{\bfseries #1\ #2\ (#3)}]}
\newcommand{\FO}{\@ifnextchar[\FOplus{\ensuremath{\textsf{FO}[\mathord<]}}}
\def\FOplus[#1]{\ensuremath{FO[\mathord<,#1]}}
\newcommand{\BOS}{\textsc{BOS}}
\newcommand{\EOS}{\textsc{EOS}}
\newcommand{\cttn}[2]{\ensuremath{\textsc{\textbf{\#}} \left[ #1 \right] \; #2}}
\newcommand{\cttntwo}[2]{\ensuremath{\textsc{\textbf{\#}$_2$} \left[ #1 \right] \; #2}}
\newcommand{\cif}[3]{\ensuremath{#1\;\mathbf{?}\; #2\; \textbf{:} \;#3}}
\let\origproof\proof
\def\proof{\vspace*{-1\baselineskip}\origproof}
\tikzset{
    diagonal fill/.style 2 args={fill=#2, path picture={
    \fill[#1, sharp corners] (path picture bounding box.south west) -|
                             (path picture bounding box.north east) -- cycle;}},
    reversed diagonal fill/.style 2 args={fill=#2, path picture={
    \fill[#1, sharp corners] (path picture bounding box.north west) |- 
                             (path picture bounding box.south east) -- cycle;}}
}
\def\R{\mathbb{R}}
\def\N{\mathbb{N}}
\def\F{\mathbb{F}}
\newcommand{\tense}{\textnormal{\textsf{K}}_{\textnormal{t}}}
\newcommand{\KC}{\tense[\sinceCO]}
\newcommand{\KCmod}{\tense[\sinceCO;\mathsf{MOD}]}
\newcommand{\sinceC}[1]{\textbf{\#}\!\left[#1\right]}
\newcommand{\sinceCO}{\textbf{\#}}
\newcommand{\FOC}{\ensuremath{\textsf{FOC}}}
\newcommand{\MOD}{\ensuremath{\textsf{MOD}}}
\newcommand{\FOCplusMOD}{\textnormal{$\FOC[\mathord+;\MOD]$}}
\newcommand{\FOCplus}{\ensuremath{\FOC[\mathord+]}}
\newcommand\CRASP{\textsf{C-RASP}}
\newcommand{\SAT}{soft attention transformer}
\newcommand{\SATs}{soft attention transformers}
\newcommand{\ReLU}{\text{ReLU}}
\title{Counting Like Transformers: Compiling \\ Temporal Counting Logic Into Softmax Transformers}
\author{Andy Yang \& David Chiang 
\\
University of Notre Dame\\
\texttt{\{ayang4,dchiang\}@nd.edu} \\
}
\date{March 2024}
\begin{document}

\maketitle

\begin{abstract}
    Deriving formal bounds on the expressivity of transformers, as well as studying transformers that are constructed to implement known algorithms, are both effective methods for better understanding the computational power of transformers. Towards both ends, we introduce the temporal counting logic $\KC$ alongside the RASP variant $\CRASP$. We show they are equivalent to each other, and that together they are the best-known lower bound on the formal expressivity of future-masked \SATs{} with unbounded input size. We prove this by showing all $\KC$ formulas can be compiled into these transformers without any additional positional embeddings.
\end{abstract}

\section{Introduction}

What problems can transformers \citep{vaswani-etal-2017-attention} solve, what problems can they not solve, and how can we prove it? Formal logic, in connection with programming language theory, formal language theory, and finite model theory, give a framework in which to investigate these questions. 

Previous theoretical work, as surveyed by \citet{strobl2024transformers}, has advanced our understanding of transformers immensely. However, it has not provided a full account of their expressive power.
Much of this work only considers modifications of transformers, like average-hard attention transformers (AHATs) \citep{barcelo2023logical} or unique-hard attention transformers (UHATs) \citep{angluin2023masked}, which are not known to be either a subset or superset of standard, soft-attention transformers (SMATs). 
At the same time, programming languages like RASP \citep{weiss2021thinking} propose a human-readable language with which to understand transformer computations.
However, current languages compile into transformers that appear to be more powerful than standard transformers \citep{weiss2021thinking}, or require approximations and restrictions on input length \citep{lindner2023tracr} to do so. 

Here, we target \SATs{} (as originally defined \citep{vaswani-etal-2017-attention} and as used in practice). We prove that future-masked \SAT{} encoders, with no restriction on input length, can recognize all the formal languages defined by formulas of $\KC$, a temporal counting logic. 
Along the way we develop a RASP variant called $\CRASP$, equivalent to $\KC$. 
Both are, to our knowledge, the tightest-known lower bound on the expressivity of \SAT{} encoders. Our contributions are as follows:

\begin{compactitem}
\item We define $\CRASP$, the first variant of RASP that provably compiles into future-masked \SAT{} encoders with no restrictions on the input length.
\item We prove that $\CRASP$ is equivalent to $\KC$.
\item We prove that the previous best lower bound, $\FOCplusMOD$ \citep{chiang-2023-tighter}, is strictly less expressive than $\KC$. 
\item We prove that transformers which use fixed-precision numbers (as real-world transformers do) can be compiled back to $\KC$.
\end{compactitem}

\section{Background}

Previous theoretical work on the expressivity of transformers has related them to a variety of automata, circuit classes, and logics, all under varying assumptions \citep{strobl2024transformers}.
Here, we focus on using logics to characterize  \SAT{} encoders with future masking. 
In particular, these encoders perform the same computations as one step of a transformer decoder. 
Decoder-only models like GPT \citep{openai2023gpt4} currently dominate applications of transformers, while empirical work has noted they have significant limits and perplexing behavior. 
We believe theoretical analysis can provide valuable insight towards understanding how to best use these models in practice.

The previous best upper bound on log-precision transformers (which are argued to closely approximate real-life behavior) is $\mathsf{TC}^0$ \citep{merrill2023logic}. 
The previous best lower bound is $\FOCplusMOD$ \citep{chiang-2023-tighter}. 
We strengthen this lower bound using $\KC$, a temporal counting logic, and $\CRASP$, a new variant of the programming language RASP. We show both can be simulated by \SATs{}.

\subsection{Hard and Soft Attention}\label{sec:hard_vs_soft}

Many previous works have investigated the expressivity of hard attention transformers, including \citet{angluin2023masked, barcelo2023logical, yao2021self}. However, reconciling the differences between the theoretical model of AHATs and the standard SMATs actually remains a very open area of inquiry at the moment.

In particular, it is not yet clear how the expressivity bounds on average-hard attention transformers apply to softmax attention transformers.
On the one hand, average-hard attention transformers can express sparse attention patterns by assigning weights of zero to positions, but  \SATs{} can only approximate these patterns (scores are always non-zero).
On the other hand SMATs can express non-uniform attention patterns, which AHATs cannot. 
As such it is not known whether softmax attention can simulate average-hard attention (or even the other way around), so lower bounds on the expressivity of average-hard attention transformers cannot directly be applied to \SATs{}.

Thus, while the counting logic $\textsf{LTL}(\textsf{C},+)$ \citep{barcelo2023logical} contains $\KC$, we note that they only prove it is a lower bound for average-hard attention transformers, not \SATs{}. We can view $\KC$ and $\CRASP$ as a gentle enough restriction of $\textsf{LTL}(\textsf{C},+)$ so as to render it simulatable by softmax attention transformers, but still above previous known bounds. We hope future work will clarify the disparities between average-hard and softmax attention.

\subsection{RASP and Tracr}

Implementing algorithms in transformers using human-readable programming languages gives researchers and engineers a deeper understanding of the computations transformers can perform. 
We believe that using this formalism to understand transformers has not only pedagogical benefits, but theoretical ones as well. 
For example, this perspective has been used by \citet{transformers-learn} to shed light on the length-generalization capabilities of transformers. 

These programming languages promise to compile into transformers that implement known algorithms, which are therefore interpretable by construction.
However, existing examples make several unrealistic assumptions about transformers, rendering them inappropriate for compilation into standard transformers. 

The primary example is RASP, %
which makes three strong assumptions. First, the transformers that RASP compiles into use average-hard attention, which are not known to be exactly simulated by soft attention transformers (although average-hard attention behavior has been observed to be learned approximately, in practice \citep{merrill-etal-2021-effects}). 
Second, the attention weights (selectors) are not restricted to be dot-products of query and key vectors; this allows compilation of expressions involving arbitrary binary predicates like $x = y$ or $x < y$.
Third, they assume position-wise feed-forward networks can implement any computable function, with the rationale that any continuous function can be approximated by the universal approximation theorem \citep{hornik1989multilayer}.

Building on RASP, Tracr \citep{lindner2023tracr} compiles a subset of RASP into standard transformers. 
It compiles RASP selectors to dot-product attention, although this requires a syntactic restriction on selectors and a maximum string length, and it compiles RASP element-wise operations to $\ReLU$ FFNs, though only approximately.

Furthermore, neither of these consider layer normalization, which \citet{brody2023expressivity} show contributes to the expressivity of transformers.

Here, we define a variant of RASP that has more restrictions, but that can be compiled exactly into a \SAT{} encoder. Our variant, $\CRASP$, is based on the temporal counting logic $\KC$. 

\subsection{\FOCplusMOD}

Counting logics are a rich area \citep{van2023interleaving} of logic, whose connection with transformers has been noted by \citet{chiang-2023-tighter} and \citet{barcelo2023logical}. In essence, uniform attention patterns -- where attention is spread evenly across positions -- can very naturally simulate counting terms. \Citet{chiang-2023-tighter} define a variant of first-order logic with counting quantifiers, called $\FOCplusMOD$, and prove that, on the one hand, any sentence of $\FOCplusMOD$ can be translated into an equivalent \SAT{} encoder, and on the other hand, any \emph{fixed-precision} \SAT{} encoder can be translated into an equivalent sentence of $\FOCplusMOD$.

However, \FOCplusMOD{} seems somewhat underpowered.
It has a normal form that uses only one quantifier alternation ($\exists x. \exists^{=x} p. \cdots$) and only one position variable. This means the equivalent transformer only has depth $2$, and only uses the output of self-attention at one position. By considering an ordering on positions (and future-masking on the corresponding transformers) we derive a much better lower-bound result.

\subsection{Temporal logic}

A technical challenge when simulating variants of first-order logic with transformers is that a formula with $k$ free variables, each of which is interpreted as a position in from $1$ to $n$, would seem to correspond to a tensor of $n^k$ values in the corresponding transformer, but transformers only have $n$ values at each layer and $n^2$ values in the attention weights.

Whereas $\FOCplusMOD$ avoids this difficulty by using a normal form with only one variable, 
\citet{angluin2023masked} and \citet{barcelo2023logical} avoid it by relying on linear temporal logic. 

Temporal logics \citep{gabbay1980temporal} have been widely adopted as tools for the formal verification of state properties during the execution of programs over time. Intuitively, temporal logics can be used to formalize statements such as the following: 
\begin{center}
    Until the first train arrived at the gate, the bar remained lowered. \\
    My arm has been sore since Tuesday. \\
    At no point will the temperature go below zero.
\end{center}

More abstractly, we can also use the syntax of temporal logic to specify the occurance of symbols in a string $w$.

\begin{center}
    Until the first symbol $t$, $w$ contains only $l$'s. \\
    Only the symbol $s$ has appeared since position $2$ in $w$. \\
    At no position does $w$ contains a $z$.
\end{center}

We believe that temporal logics are a very appropriate specification language for thinking about masked self-attention. 
Firstly, the temporal accessibility relation -- properties at time $i$ can only depend on times $j\leq i$ -- provides a natural way to model future-masking in transformers.
Secondly, the restricted use of variables in temporal logic corresponds closely to the computational processing resources of the standard transformer, where scores depend on only two positions. 
Finally, temporal logics are highly-utilized in the field of formal methods \citep{fisher2011introduction}, so solidifying this connection with transformers may allow more ideas to get shared across disciplines.

\section{\CRASP}

We follow in the footsteps of \citet{weiss2021thinking} to define a variant of RASP called $\CRASP$. 
The audience may find the syntax of $\CRASP$ is easier to follow, so we present it first before defining $\KC$ (although both are equivalent in the end).
In proofs, we generally prefer the more compact syntax of $\KC$, but for writing programs, we use $\CRASP$. 

\subsection{Definitions}
\begin{definition}[$\CRASP$]\label{def:CRASP}
     A $\CRASP$ program is defined as a sequence $P_1,\ldots, P_n$ of $\CRASP$ operations. There are two types of operations:
    
    \begin{multicols}{2}
    \begin{center}
    \begin{flushleft}
    \begin{center}
        \textbf{Boolean-Valued Operations}
    \end{center}
    \begin{tabular}{ll}
        \toprule
         \textbf{Initial} & $P(i):=Q_a(i)$ for $a\in\Sigma$ \\
         \midrule
         \textbf{Boolean} & $P(i):=\lnot P_1(i)$ \\
         & $P(i):=P_1(i)\land  P_2(i)$\\
         \midrule
         \textbf{Comparison} & $P(i):=C_1(i)\leq C_2(i)$\\
         \midrule
         \textbf{Constant} & $P(i):=1$\\
         \bottomrule
    \end{tabular}
    \end{flushleft}
    \end{center}
    
    \columnbreak
    
    \begin{center}
    \begin{flushleft}
    \begin{center}
        \textbf{Count-Valued Operations}
    \end{center}
    \begin{tabular}{ll}
        \toprule
         \textbf{Counting} & $C(i):=\cttn{j\leq i}{P(j)}$ \\
         \midrule
        \textbf{Conditional} & $C(i):=\cif{P(i)}{C_1(i)}{C_2(i)}$\\
        \midrule
        \textbf{Addition} & $C(i):=C_1(i)+C_2(i)$\\
        \midrule
        \textbf{Subtraction} & $C(i):=C_1(i)-C_2(i)$\\
        \midrule
        \textbf{Min/Max} & $C(i):=\min(C_1(i),C_2(i))$\\
             & $C(i):=\max(C_1(i),C_2(i))$\\
        \midrule
        \textbf{Constant} & $C(i):=1$\\
        \bottomrule
    \end{tabular}
    \end{flushleft}
    \end{center}
    \end{multicols}

Counting operations count the positions $j\leq i$ such that $P(j)$ holds, returning the sum. We could also extend the syntax to use an expression $P(i,j)$ which depends on both $i$ and~$j$; this can be transformed to $P(j)$ since our logic has only unary predicates, as shown in \cref{lem:crasp-binary-to-unary}. Conditional operations return $C_1$ if $P$ holds, and $C_2$ otherwise.

By convention, when using a $\CRASP$ program to recognize languages, we use the value of the \emph{last} operation, which must be Boolean-valued, at the last position, to determine acceptance. That is, if the program is run on input $w$ with length $n$, and the last operation is $D$, then we accept $w$ if and only if $D(n)$ is true.

\end{definition}

\begin{example}\label{ex:dyck}
 We present a program to recognize Dyck-1 as an example. More annotated examples can be found in \cref{sec:crasp-examples}
    \begin{align*}
        C_((i) &:= \cttn{j\leq i}{Q_((j)} && \text{The number of $($ up to position $i$}\\
        C_)(i) &:= \cttn{j\leq i}{Q_)(j)} && \text{The number of $)$ up to position $i$}\\
        V(i) &:= C_((i) < C_)(i) && \text{\emph{Violation}: there are more $)$ than $($}\\
        C_V(i) &:= \cttn{j\leq i}{V(j)} && \text{The number of \emph{Violations}}\\
        M(i) &:= C_V(i) = 0 && \text{\emph{Matched}: zero \emph{Violations}}\\
        B(i) &:= C_((i)=C_)(i) && \text{\emph{Balanced}: same number of $($ and $)$}\\
        D(i) &:= M(i) \land B(i) && \text{String is \emph{Matched} and \emph{Balanced}}
    \end{align*}
    
\end{example}
\pagebreak
\section{The Temporal Counting Logic $\KC$}

In this section, we define the temporal counting logic $\KC$.

\subsection{Definitions}

The temporal logic we target here is the past fragment of the Minimal Tense Logic \citep{rescher2012temporal}, with counting terms \citep{barcelo2023logical}. 
It can also be thought of as a modal logic with arithmetic constraints, like that of \citet{demri2010complexity}, simply restricted to the setting where the structures are strings.

    We present the syntax of $\KC$ in Backus–Naur form:\footnote{We pronounce $\KC$ as ``K-t-sharp''. The logic $\mathbf{K}_\text{t}$ is E.J. Lemmon's minimal tense logic \citep{rescher2012temporal}, where $\mathsf{K}$ is the minimal modal logic, 
    and the $\mathrm{t}$  
    refers to ``tense'', indicating that our structures are linear, like timelines or strings. Additionally, $\sinceCO$ is the modal counting operator, which is fairly standard notation in counting logics.}
    \begin{align*}
        F & ::= Q_a \mid \lnot F \mid F\land F \mid C \leq C \mid \top \\
        C & ::= \sinceC{F}\mid  C + C \mid  C - C \mid1
    \end{align*} 
    In temporal logics, formulas and terms are written without arguments because they are always interpreted with respect to a structure at a specified position.
    In our setting, they are interpreted with respect to a string $w$ at a position $i$, where $w=w_1w_2\cdots w_{n}$ and $i \in [1,n]$. 
    A count term $C$ is interpreted as an integer, written $C^{w,i}$ and defined as follows. 
    \begin{equation*}
        \begin{aligned}
            &\sinceC{F}^{w,i} &&= \quad |\{j\in[1,i] \mid w,j \vDash F\}| \\
            &(C_1+C_2)^{w,i} &&= \quad C_1^{w,i}+C_2^{w,i} \\
            &(C_1-C_2)^{w,i} &&= \quad C_1^{w,i}-C_2^{w,i} \\
            &1^{w,i} &&= \quad 1
        \end{aligned}
    \end{equation*}

    As syntactic sugar, we allow the use of any natural number as a constant, which implicitly is defined as a sum of $1$'s. Similarly, $0$ can be defined as $\sinceC{\lnot\top}$, and $i$ as $\sinceC{\top}$. Next, the interpretation of a formula $F$ at position $i$, written $w,i \vDash F$, defined as follows:
    \begin{equation*}
        \begin{aligned}
            &w,i \vDash Q_a &&\iff &&\text{$w_i = a$} \\
            &w,i \vDash \lnot F && \iff &&w,i\not\vDash F\\
            &w,i \vDash F_1\land F_2 &&\iff &&\text{$w,i \vDash F_1$ and $w,i \vDash F_2$} \\
            &w,i \vDash C_1\leq C_2 &&\iff &&C_1^{w,i}\leq C_2^{w,i} \\
            &w,i \vDash \top &&&&\text{is always the case}
        \end{aligned}
    \end{equation*}

    We say that a string $w$ with length $n$ \emph{end-satisfies} $\phi$ a formula of $\KC$ whenever $w,n \vDash \phi$ \citep{maler1990tight}. The language defined by $\phi$ is the set of all strings end-satisfied by $\phi$. 
    As a final note, whenever $w$ is implicit, we can write $F(i)$ which is True iff $w,i\vDash F$ and also write $C(i)$ which is equal to $C^{w,i}$.

\subsection{Examples}

Although an exact characterization of $\KC$ is not currently known, we can see that it can define a variety of regular, context-free, and non-context-free languages. 

\begin{center}
    \begin{tabular}{ll}
        \toprule
        Language & Formula \\
        \midrule
        $a^*b^*$ & $\sinceC{Q_{a}\land \left(\sinceC{Q_{b}}\geq1\right)}=0$ \\[1ex]
        $a^*b^*a^*$ & $\sinceC{Q_{b}\land\sinceC{Q_{a}\land\left(\sinceC{Q_{b}}\geq 1\right)}\geq 1}=0$\\[1ex]
        $a^nb^nc^n$ & $\sinceC{Q_{b}\land \left(\sinceC{ Q_{c}}=0\right)}= \sinceC{Q_b}$ \\[0.4ex]
        & $\land\sinceC{Q_{a}\land \left(\sinceC{Q_{b}\lor Q_{c}}=0\right)}= \sinceC{Q_a}$ \\[1ex]
        &$\land\sinceC{Q_{a}}=\sinceC{Q_{b}}\land\sinceC{Q_{b}}=\sinceC{Q_{c}}\land\sinceC{Q_{c}}=\sinceC{Q_{a}}$\\[1ex]
        Dyck-1 & $\left(\sinceC{Q_{(}}=\sinceC{Q_{)}}\right)\land \left(\sinceC{\sinceC{Q_{)}}>\sinceC{Q_{(}}}=0\right)$\\[1ex]
        $hello$ & $\sinceC{\top}=5\land Q_{o}\land\sinceC{Q_{l}\land\sinceC{Q_{e}\land\sinceC{Q_{h}}=1}=1}=2$ \\
        \bottomrule
    \end{tabular}
\end{center}

It is of note that the context sensitive language $a^nb^nc^n$ has been observed to be learnable by transformers \citep{bhattamishra-etal-2020-ability}.
\subsection{Modal Depth}\label{sec:modal-depth}

\newcommand{\md}{\text{md}}

\begin{definition}
    The \emph{modal depth} of a formula $\phi$ or term $C$, which we notate as $md(\phi)$, is the maximum level of nesting of $\sinceCO$ terms. That is,
    \[\renewcommand{\arraystretch}{1.2}
      \begin{array}{l@{\;}l@{\qquad}l@{\;}l}
        \md(Q_\sigma) &= 0 & \md(1) &= 0 \\
        \md(\lnot \phi) &= \md(\phi) & \md(\#[{\phi}]) &= 1+\md(\phi) \\
        \md(\phi_1\land\phi_2) &= \max(\md(\phi_1),\md(\phi_2)) & \md(C_1+C_2) &= \max(\md(C_1),\md(C_2)) \\
        \md(C_1\leq C_2) &= \max(\md(C_1),\md(C_2)) &
    \end{array}\]
\end{definition}

The following construction gives some intuition on the effect of modal depth.
\begin{lemma}\label{lem:piecewise-testing}
    For every string $s$ of length $n$, there exists a formula $\phi_s$ of modal depth $n$ such that $w\vDash \phi_s$ if and only if $w$ contains $s$ as a subsequence.
\end{lemma}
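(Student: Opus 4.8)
The plan is to induct on the length $n = |s|$, writing $s = s_1\cdots s_n$, and to build $\phi_s$ so that exactly one counting term is introduced per symbol of $s$, forcing the modal depth down to $n$. The semantic goal is the characterization that $w$ contains $s$ as a subsequence iff there are positions $j_1 < \cdots < j_n$ with $w_{j_\ell} = s_\ell$. The clean way to enforce the strict ordering $j_1 < \cdots < j_n$ is a \emph{strict past} existence operator: for a formula $G$, let $\blacklozenge G$ hold at $i$ iff $G$ holds at some position strictly less than $i$. The first thing I would check is that $\blacklozenge$ costs only a single $\sinceC{}$: since the count of $G$ over $[1,i]$ equals its count over $[1,i-1]$ plus the indicator that $G$ holds at $i$, we have $\blacklozenge G \equiv (\lnot G \land \sinceC{G}\geq 1)\lor(\sinceC{G}\geq 2)$, and hence $\md(\blacklozenge G) = 1 + \md(G)$.

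Next I would define auxiliary formulas $\Psi_0,\ldots,\Psi_{n-1}$ by $\Psi_0 = \top$ and $\Psi_k = \blacklozenge(Q_{s_k}\land \Psi_{k-1})$, and set $\phi_s = \sinceC{Q_{s_n}\land \Psi_{n-1}}\geq 1$. The invariant, proved by induction on $k$, is that $\Psi_k$ holds at $j$ iff $s_1\cdots s_k$ occurs as a subsequence of $w_1\cdots w_{j-1}$ (the case $k=0$ being vacuously true). The inductive step just unfolds the operator: $\Psi_k$ at $j$ asserts that $Q_{s_k}\land\Psi_{k-1}$ holds at some $m < j$, i.e. $w_m = s_k$ and, by the hypothesis, $s_1\cdots s_{k-1}$ is a subsequence of $w_1\cdots w_{m-1}$; appending $m$ yields a match $j_1 < \cdots < j_{k-1} < m < j$. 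Reading $\phi_s$ under the end-satisfaction convention at $i = |w|$ then says exactly that some $m \leq |w|$ has $w_m = s_n$ with $s_1\cdots s_{n-1}$ matched strictly before $m$, which is the subsequence condition. For the depth bound I would track $\md(\Psi_k)=k$ by induction, from $\md(\Psi_k) = 1 + \max(0,\md(\Psi_{k-1})) = 1+(k-1)$, so that $\md(\phi_s) = 1 + \md(\Psi_{n-1}) = n$.

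The main obstacle is precisely holding the depth at $n$ rather than $2n$. The ``strictly before'' requirement is genuinely needed -- it is what guarantees distinct matched positions (for instance $s = aa$ is \emph{not} a subsequence of $w = a$, so an inclusive past would give the wrong answer) -- and a naive construction would nest an inclusive existential around a strict one at every level, doubling the count-nesting. Two observations defuse this: (i) the single-$\sinceC{}$ encoding of $\blacklozenge$ above, which buys strictness essentially for free via the $\geq 2$ comparison in the case where the current position itself witnesses $G$; and (ii) the fact that $\Psi_{k-1}$ is itself a past-existence statement, so that a strict past of it collapses to a strict past of its inner predicate and no spurious extra layer of counting accumulates. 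Verifying these two points is the crux; the semantic induction and the depth computation are then routine.
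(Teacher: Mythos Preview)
Your proof is correct and follows the same overall architecture as the paper's: a chain of formulas indexed by prefixes of $s$, each adding exactly one $\sinceC{}$, with a device to ensure the next match lies \emph{strictly} before the current position. The difference is in that device. The paper case-splits on whether $s_k = s_{k+1}$: it sets $\tau_{k+1} = \sinceC{Q_{s_{k+1}} \land \tau_k \geq 1}$ when $s_k \neq s_{k+1}$ (so the current position cannot contribute to $\tau_k$ anyway) and $\tau_{k+1} = \sinceC{Q_{s_{k+1}} \land \tau_k \geq 2}$ when $s_k = s_{k+1}$. Your $\blacklozenge G \equiv (\lnot G \land \sinceC{G}\geq 1)\lor(\sinceC{G}\geq 2)$ packages both cases into one uniform operator; the paper's version is effectively a compile-time specialization of yours, since under $Q_{s_{k+1}}$ with $s_k \neq s_{k+1}$ the inner predicate $Q_{s_k}\land\cdots$ is automatically false at the current position, selecting your first disjunct. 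Your route is more modular; the paper's avoids the disjunction.

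One remark on your final paragraph: observation (i) alone already yields $\md(\blacklozenge G) = 1 + \md(G)$, and the induction $\md(\Psi_k)=k$ follows directly. Observation (ii) about ``collapsing'' a strict past of a past-existence formula is not used anywhere in your argument and can be dropped.
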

\begin{proof}
    Let $s=s_1s_2\cdots s_n$. Then define $\phi_s := \tau_{n}\geq 1$ where
    \begin{align*}
        \tau_{1} &:= \sinceC{Q_{s_1}}\\
        \tau_{k+1} &:= \begin{cases}
                            \sinceC{Q_{s_{k+1}}\land \tau_{k} \geq 1} & s_k\neq s_{k+1}\\[1ex]
                            \sinceC{Q_{s_{k+1}}\land \tau_{k} \geq 2} & s_k=s_{k+1}\\
                        \end{cases}
    \end{align*}
    Verification is left as an exercise for the reader.
\end{proof}
As a consequence of the above and \cref{thm:KCtoTransformer}, we see that masked \SATs{} can recognize all the piecewise testable languages \citep{klima2008hierarchies}, a subset of the star-free languages.
Recall, however, that $\KC$ can express much more: for example, the context sensitive language $a^nb^nc^n$. A comprehensive study of the expressive power of $\KC$ and $\CRASP$ (and the effect of modal depth on expressivity) would be informative, and is left for future work.

\subsection{$\KC$ and $\CRASP$}

$\CRASP$ may have a more convenient syntax to write programs in, but $\CRASP$ programs and $\KC$ formulas are exactly equivalent in expressivity.

\begin{theorem}\label{lem:crasp-to-logic}
    A $\CRASP$ program recognizes language $L$ iff a $\KC$ formula defines $L$. More precisely, given alphabet $\Sigma$, for any $\KC$ formula $\phi$ there is a $\CRASP$ program $P$ such that $w\in\Sigma^*$ end-satisfies $\phi$ iff $w$ is accepted by $P$, and vice versa.
\end{theorem}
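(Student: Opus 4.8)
The plan is to prove the two inclusions separately by induction on syntax, since the two directions have quite different difficulty.

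For the direction from $\KC$ to $\CRASP$, which is the easy one, I would exploit the fact that $\KC$ is essentially a syntactic fragment of $\CRASP$. Given a formula $\phi$, I traverse its syntax tree in post-order, emitting one $\CRASP$ operation per subformula or subterm and referring to previously emitted operations by name. Each $\KC$ construct has a direct $\CRASP$ counterpart: $Q_a$ becomes an Initial operation; $\lnot$ and $\land$ become the corresponding Boolean operations; $C_1\leq C_2$ becomes a Comparison; $\top$ and $1$ become the respective Constant operations; $\sinceC{F}$ becomes a Counting operation $\cttn{j\leq i}{P_F(j)}$; and $+,-$ become Addition and Subtraction. A simultaneous induction then shows that at every position $i$ the emitted Boolean operation for a subformula $\psi$ equals $w,i\vDash\psi$ and the emitted count operation for a subterm $c$ equals $c^{w,i}$. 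Reading the final operation (which computes $\phi$) at the last position $n$ matches end-satisfaction $w,n\vDash\phi$.

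For the direction from $\CRASP$ to $\KC$, which is the harder one, the obstacle is that $\CRASP$ has three count-valued operations — Conditional, Min, and Max — with no counterpart in the $\KC$ grammar, whose count terms are built only from $\sinceC{\cdot}$, $+$, $-$, and $1$. My approach is to carry a richer invariant through an induction over the operations of the program (in their defining order): each Boolean-valued operation is shown equivalent to a single $\KC$ formula, while each count-valued operation $C$ is represented in \emph{piecewise} form, namely a finite list of pairs $(\gamma_k,c_k)$ where the guards $\gamma_k$ are $\KC$ formulas partitioning all positions and each $c_k$ is a \emph{pure} $\KC$ count term, such that $C(i)=c_k^{w,i}$ whenever $w,i\vDash\gamma_k$. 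The inductive steps then refine partitions rather than expressing forbidden operations directly: on a common refinement of the guards of $C_1$ and $C_2$, Addition and Subtraction act on the pure terms by $+$ and $-$; a Conditional $\cif{P}{C_1}{C_2}$ splits each piece on the $\KC$ formula for $P$; and $\min$ (resp.\ $\max$) of two pure terms $a,b$ is handled by further splitting each piece on the pure $\KC$ formula $a\leq b$, selecting $a$ or $b$ accordingly. A Counting operation $\cttn{j\leq i}{P(j)}$ yields the single-piece pure term $\sinceC{\phi_P}$, where $\phi_P$ is the $\KC$ formula for $P$ from the induction hypothesis. Finally, a Comparison $C_1\leq C_2$ collapses back to a single Boolean: over the common refinement into pieces $\gamma_{k,l}$ with pure terms $a_k,b_l$, it equals $\bigvee_{k,l}\bigl(\gamma_{k,l}\land(a_k\leq b_l)\bigr)$. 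Applying this to the final (Boolean) operation and evaluating at position $n$ gives the desired $\KC$ formula.

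The main obstacle is precisely the elimination of Min, Max, and Conditional, which cannot be expressed as standalone pure count terms: they are not linear and $\KC$ has no multiplication. The piecewise-term invariant is what resolves this — these operations are expressible only relative to Boolean guards, and because the guards are themselves $\KC$ formulas while every comparison ultimately re-collapses the pieces into a Boolean disjunction, the non-linearity never has to be named as a term. The one piece of bookkeeping to verify is that the number of pieces stays finite: it at most multiplies across the two arguments at each arithmetic node and doubles at each $\min$, $\max$, or Conditional, so the construction terminates and produces genuine finite $\KC$ formulas.
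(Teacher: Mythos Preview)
Your proposal is correct, and the underlying idea for the hard direction ($\CRASP\to\KC$) is the same as the paper's: Conditional, Min, and Max cannot be named as $\KC$ count terms, so they must be eliminated by Boolean case-splitting at the formula level. The organization differs. The paper proceeds top-down at each Comparison node, enumerating a case table over the syntactic form of the two compared operations (counting, conditional, min, max, constant) and writing out the resulting $\KC$ formula for each of the roughly $5\times 5$ pairs; nested conditionals/min/max are handled by implicitly recursing into the $\hat C_k$. Your bottom-up piecewise invariant $(\gamma_k,c_k)_k$ makes that recursion explicit and uniform: you state once what the translation of \emph{any} count-valued operation is (a finite guarded family of pure terms), and then each $\CRASP$ construct acts on such families by refining the partition. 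This buys a cleaner induction hypothesis, avoids the large case table, and makes termination (finiteness of the piece count) transparent; the paper's version is more concrete at the leaves but leaves the treatment of deeply nested non-linear terms to the reader. The two arguments are equivalent in content.
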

\begin{proof}
See \cref{sec:crasp-equals-kc}.
\end{proof}

\section{From $\KC$ to Masked-Attention Transformers}

In this section, we show how to compile $\KC$ into transformers.
It would also be possible to translate directly from $\CRASP$ to transformers; this would use fewer dimensions, but more layers. %

\subsection{Transformers}
We assume familiarity with the transformer architecture \citep{vaswani-etal-2017-attention}, and only review the basic definitions particular to our setting. To simplify our analysis, we do not consider positional encodings at first, deferring them to \cref{sec:other_formalisms}.

\begin{definition}(Word Embeddings)
    Let $w$ be an input string of length $n$ over a finite alphabet $\Sigma$. We prepend to $w$ a special symbol $\BOS$, which we assume is not in $\Sigma$. We abbreviate $n+1$ as $n'$. A word embedding with dimension $d$ over $\Sigma$ is a function $\mathit{WE}\colon \Sigma\cup\{\BOS\}\to \R^{d}$ applied position-wise to a string of length $n$ to form a tensor in $\R^{d\times n'}$.
\end{definition}

\begin{definition}[Transformer Block]
    A transformer block $B$, defined with a dimension $d$, specifies a function $B \colon \R^{d\times n'}\to\R^{d\times n'}$ that computes
        \begin{align*}
            B(A) &= \mathit{LN_2}(\mathit{FFN}(A')+A')\\
            A' &= \mathit{LN_1}(\mathit{SA}(A)+A)
        \end{align*}
    where $\mathit{SA}$ denotes a self-attention layer, by the standard definition \citep{vaswani-etal-2017-attention}, $\mathit{FFN}$ denotes a two layer feed-forward neural network with $\ReLU$ activations between the layers, and $\mathit{LN_1}$ and $\mathit{LN_2}$ denote position-wise applications of LayerNorm. This setup is commonly referred to as a ``post-norm'' block. 
\end{definition}  

\subsection{Overview of the translation}

The input and output of a transformer block are tensors in $\R^{d \times n'}$. 
The resulting sequence of tensors across transformer blocks is sometimes referred to as the ``residual stream'' \citep{elhage2021mathematical}.
We store the values of each subformula or count term of a $\KC$ formula in a different dimension of the residual stream.

Let $A \in \R^{d \times n'}$ be a tensor in the residual stream. Formulas $\phi_k$ are stored as two rows of $A$:
\begin{equation*}
A_{2k-1:2k,*} = \begin{bmatrix}
\phantom{-}1 & -2\phi_k(1)+1 & -2\phi_k(2)+1 & \cdots & -2\phi_k(n)+1\\
-1 &+2\phi_k(1)-1 & +2\phi_k(2)-1 & \cdots & +2\phi_k(n)-1
\end{bmatrix}.
\end{equation*}
Similarly, count terms $C_k$ are stored as:
\begin{equation*}
A_{2k-1:2k,*} = \begin{bmatrix}
0 & -\frac{C_k(1)}{2} & -\frac{C_k(2)}{3} & \cdots & -\frac{C_k(n)}{n'} \\[1ex]
0 & +\frac{C_k(1)}{2} & +\frac{C_k(2)}{3} & \cdots & +\frac{C_k(n)}{n'}
\end{bmatrix}.
\end{equation*}
The division of $C(i)$ by $(i+1)$ is a consequence of the fact that attention computes an average rather than a sum. 
Dealing with these divisions is a common feature of many transformer constructions. In contrast to other constructions that undo the divisions using nonstandard embeddings \citep{perez-etal-2021-turing,barcelo2023logical} or nonstandard versions of LayerNorm \citep{merrill-sabharwal-2024-cot}, our construction uses no position embeddings and only standard LayerNorm.
A minor consequence of our handling of comparison (\cref{sec:comparison}) is that while Boolean values are preserved throughout the computation, integer values can get overwritten. 

The reason for representing every value as two transformer activation values is to account for LayerNorm. 
It ensures that all feature vectors have zero mean, so LayerNorm only applies a position-wise scaling factor. When necessary, we describe how to use LayerNorm to remove this scaling factor in \cref{sec:comparison}.

Observe that for subformulas, our convention states the $\BOS$ position is always false. Consequently, for count terms, the $\BOS$ position is always $0$. We can ensure this with a feed-forward layer. 

\begin{lemma}\label{lem:BOS-handling}
    Using the word embedding and a single feed-forward layer, we can set the $\BOS$ position to False, without disturbing the Boolean value at any other position.
\end{lemma}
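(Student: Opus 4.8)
The plan is to exhibit an explicit feed-forward layer that overwrites only the $\BOS$ column of the residual stream. First I would arrange for the word embedding to carry, alongside the initial-predicate encodings, a dedicated coordinate $f$ that flags the $\BOS$ position: since $\BOS\notin\Sigma$ is a distinct symbol, we are free to choose $\mathit{WE}(\BOS)$ so that $f=1$ in the $\BOS$ column and $f=0$ in every genuine input column. The target is to produce, after the residual connection, the False encoding $(1,-1)$ in each Boolean pair of rows at the $\BOS$ column while leaving all other columns bit-for-bit unchanged.

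I would then build the layer $x\mapsto W^{(2)}\ReLU(W^{(1)}x+b^{(1)})+b^{(2)}$ as a gated correction. The first layer reads the flag into a single hidden unit $h=\ReLU(f)$; since $f\in\{0,1\}$ we have $h=f$, so $h$ equals $1$ exactly at $\BOS$ and $0$ elsewhere. The second layer takes zero bias ($b^{(2)}=0$) and routes $h$ through $W^{(2)}$ so that, in the rows encoding a formula $\phi_k$, it adds the constant vector $(1,-1)-v_k$, where $v_k$ is the (fixed, known) value the embedding writes into those rows at the $\BOS$ column. At the $\BOS$ column the residual sum is $v_k+\bigl((1,-1)-v_k\bigr)=(1,-1)$, i.e.\ False; at every other column $h=0$ and $b^{(2)}=0$, so the layer outputs exactly the zero vector and the residual preserves the existing value. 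The count-term invariant then follows downstream: a count at the $\BOS$ column sums a predicate only over the $\BOS$ column itself, which is now False, so every count term is $0$ there, matching the $(0,0)$ encoding.

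The step I expect to be the main obstacle — and the reason a feed-forward layer is the natural tool — is guaranteeing the correction is \emph{exactly} zero at the non-$\BOS$ columns. A plain affine map would apply its bias uniformly and hence perturb every position; the remedy is precisely the $\ReLU$ gating, which confines the entire corrective signal to the hidden unit $h$ (vanishing off the $\BOS$ column) while keeping the output bias at zero, so nothing leaks into the genuine input positions. Finally I would verify compatibility with LayerNorm: both $(1,-1)$ and $(0,0)$ are zero-mean, so the subsequent $\mathit{LN}$ only rescales each feature vector, consistent with how the rest of the construction (\cref{sec:comparison}) absorbs the LayerNorm scaling factor.
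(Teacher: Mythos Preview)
Your gated-correction idea is different from the paper's route and, while the core mechanism is sound, two details keep it from closing as written.

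The paper stores a start-separating \emph{pair} in the embedding (True at $\BOS$, False elsewhere, encoded as $\pm1$ like every other Boolean) and has the feed-forward layer take the $\min$/$\max$ of each Boolean row with the appropriate row of that pair. Because all coordinates remain in $\{+1,-1\}$, the zero-mean, equal-magnitude structure that the rest of the construction relies on for LayerNorm is preserved automatically, and the operation forces $\BOS$ to False regardless of what currently sits there.

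Your construction slips in two places. First, a single scalar flag $f\in\{0,1\}$ breaks the zero-mean invariant at the $\BOS$ column: the feature vector there has nonzero mean, so $\mathit{LN}_1$ shifts every entry before your FFN reads it (hence $f$ is no longer exactly~$1$ and the Boolean rows are no longer exactly $\pm1$), and $\mathit{LN}_2$ shifts again after the residual. Your final paragraph checks that the Boolean \emph{pairs} are zero-mean, but the flag coordinate itself is not, and that is what does the damage; you need the flag stored as a $\pm1$ pair, exactly as the paper does. Second, in the proof of \cref{thm:KCtoTransformer} this lemma is invoked as step~4 of \emph{every} block, after self-attention, comparison, and Boolean recombination have already run. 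At that point the value at $\BOS$ in row~$k$ is not ``what the embedding writes'' but the result of those operations---still deterministic and compile-time computable, since $\BOS$ attends only to itself, but different from layer to layer. Your additive correction $(+1,-1)-v_k$ therefore requires hard-wiring a fresh $v_k$ at each layer; the paper's $\min$/$\max$ needs no such bookkeeping.
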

\begin{proof}
    See \cref{sec:BOS-handling}.
\end{proof}

\subsection{Counting using masked uniform attention}
    Count terms in $\KC$ can be simulated by uniform self-attention layers.

    \begin{lemma}\label{lem:scale-free-count}
    Let $A_{2k-1:2k},*$ store a Boolean vector as defined above. For any $i$, let $C_{k,i}$ be the number of positions $j\leq i$ such that $A_{2k-1:2k,j}$ is True. Then there is a transformer block that computes, at each position $i$, and in two other dimensions $2k'-1,2k'$, 
    the values $-\frac{C_{k,i}}{i+1}$ and $\frac{C_{k,i}}{i+1}$.
    \end{lemma}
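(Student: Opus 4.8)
The plan is to realize the count as a future-masked \emph{average}. At position $i$, causal masking lets attention mix the value vectors over exactly the positions $\{0,1,\dots,i\}$ (taking $\BOS$ as position $0$), so forcing the attention distribution to be uniform makes the output the arithmetic mean over these $i+1$ positions. If the value at each position $j$ is arranged to be the Boolean indicator $\phi_k(j)\in\{0,1\}$, the mean is exactly $\frac{1}{i+1}\sum_{j=0}^{i}\phi_k(j)=\frac{C_{k,i}}{i+1}$: the $\BOS$ position is False and contributes $0$, while the denominator $i+1$ is supplied for free by the number of unmasked positions. This is why the construction divides counts by $i+1$ rather than storing raw sums.

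First I would make the attention uniform by setting the query projection $\wq=0$ (equivalently, making all scores constant), so that after masking the softmax assigns weight $\tfrac{1}{i+1}$ to each $j\le i$. Next I would design the value projection so that position $j$ contributes its indicator. Writing the stored Boolean as $A_{2k,j}=2\phi_k(j)-1$, we have $\phi_k(j)=\tfrac12 A_{2k,j}+\tfrac12$; the linear part $\tfrac12 A_{2k,j}$ is produced by $\wv$ and routed, with opposite signs, into the two fresh dimensions $2k'-1,2k'$, and the additive constant $\tfrac12$ is supplied by a bias in the value projection (or, absent a bias, by a constant feature routed through $\wv$). Because the uniform weights sum to $1$, this additive constant is added to the average exactly once, so the attention output in those dimensions is precisely $-\frac{C_{k,i}}{i+1}$ and $+\frac{C_{k,i}}{i+1}$. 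The same affine map is correct at $\BOS$, where $A_{2k,0}=-1$ yields indicator $0$, as needed.

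It then remains to check that the residual connection and the two LayerNorms do not corrupt this value. Dimensions $2k'-1,2k'$ are fresh (zero in the incoming stream), so the residual sum leaves the attention output intact; I would also have $\wv$ write $0$ into every other coordinate, so the previously stored formulas and counts survive the residual addition untouched. The two output coordinates form a $\pm$ pair, so the whole feature vector stays zero-mean, and LayerNorm subtracts nothing and reduces to a purely positionwise rescaling that preserves the sign pattern encoding the count.

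The genuinely delicate point is this positionwise LayerNorm scaling: the rescaling factor depends on the full activation norm at position $i$, and a single learned gain cannot cancel an $i$-dependent factor. I would resolve it exactly as the overview anticipates, keeping every value in a zero-mean $\pm$ pair so that LayerNorm is a pure scaling preserving signs and ratios, and deferring removal of the scale to the comparison construction (\cref{sec:comparison}), where counts are re-normalized before they are compared. With these choices made, writing out $\wq,\wk,\wv$ explicitly and verifying the softmax computation is a routine calculation.
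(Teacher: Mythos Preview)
Your plan is the paper's: zero out the scores so masked softmax is the uniform distribution on $\{0,\dots,i\}$, and have $\wv$ route the Boolean stored in rows $2k{-}1,2k$ into fresh rows $2k'{-}1,2k'$. The one real difference is where the affine step $\pm 1\mapsto\{0,1\}$ lives. You fold it into the value map---requiring either a bias on $\wv$ (not part of the paper's model, which takes $\wv$ to be purely linear) or a constant-$1$ feature in the residual stream (not set up in the paper, though trivial to add to the word embedding)---so that attention already outputs $\pm\tfrac{C_{k,i}}{i+1}$. The paper instead lets $\wv$ copy the raw $\pm 1$ entries, so attention outputs $\pm\bigl(1-\tfrac{2C_{k,i}}{i+1}\bigr)$, and then uses the block's feed-forward layer to subtract the offset and halve. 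This buys the paper strict conformance to its own transformer definition and gives the block's obligatory FFN a job; in your variant you still owe a line saying the FFN is set to the identity on these coordinates, since every block applies one and you never mention it. With that small caveat the two arguments are interchangeable.
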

    \begin{proof}
    See \cref{sec:counting_proof} for proof and pictures. 
    \end{proof}

\subsection{Other operations using position-wise feed-forward networks}
\label{sec:ff_simulation}

    All other $\KC$ formulas and terms can be simulated by feed-forward layers. 

    \begin{lemma}\label{lem:ffn}
        The following position-wise operations can be simulated by a single transformer block, using existing dimensions as input and a fresh dimension as output: 
        addition ($+$), subtraction ($-$), comparison ($\le$), %
        and Boolean operations ($\land$, $\lnot$). 
    \end{lemma}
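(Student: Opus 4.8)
The plan is to realize each of the listed position-wise operations as a two-layer ReLU feed-forward network acting on the residual stream, exploiting the two-row encoding of each value. Recall that a Boolean $\phi_k$ is stored in rows $2k-1,2k$ as the pair $(-2\phi_k(i)+1,\,+2\phi_k(i)-1)$, i.e. $(+1,-1)$ for False and $(-1,+1)$ for True, and a count term $C_k$ is stored as $(-C_k(i)/(i+1),\,+C_k(i)/(i+1))$. Because every operation reads from existing dimensions and writes into a fresh pair of dimensions, and because FFNs act identically and independently at each position, it suffices to exhibit, for each operation, a fixed affine-ReLU-affine map on the relevant input coordinates that produces the correctly-encoded output; I will then note that all of these maps can be run in parallel within a single block by block-diagonally stacking their weight matrices. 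I must also account for the residual connection and the two LayerNorms wrapping the FFN.

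The individual constructions I would carry out are as follows. For negation, since $\lnot\phi$ flips the sign of the encoding, the map is simply a coordinate swap (equivalently, multiplication by $-1$ of the stored pair), which is affine and needs no ReLU. For conjunction $\phi_1\land\phi_2$, I would observe that with the $\pm1$ encoding, the True-valued codeword $(-1,+1)$ is attained by $\phi_1\land\phi_2$ exactly when both inputs are True; writing $x_i=+2\phi_i-1\in\{-1,+1\}$ for the ``$+$'' rows, the quantity $\tfrac12(x_1+x_2)-\tfrac12$ equals $+\tfrac12$ only when both are True and is $\le -\tfrac12$ otherwise, so a single ReLU threshold followed by an affine rescaling yields the $\{-1,+1\}$ output; the companion row is its negation. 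Addition and subtraction of count terms are linear in the stored $\pm C/(i+1)$ values, so $(C_1\pm C_2)/(i+1)$ is obtained by adding or subtracting the corresponding rows — again purely affine, requiring no ReLU. The one genuinely nontrivial case is comparison $C_1\le C_2$, which must convert a \emph{real-valued} (and position-scaled) difference into a \emph{Boolean} $\pm1$ codeword; here I would deferentially invoke the mechanism of \cref{sec:comparison} that the paper sets aside — using LayerNorm to strip the per-position $1/(i+1)$ scaling and a ReLU to implement the thresholding of $C_2(i)-C_1(i)\ge 0$ into a clean Boolean output.

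To finish, I would verify compatibility with the surrounding block structure. Each value is stored as a zero-mean pair $(-v,+v)$ by design, so $\mathit{LN_1}$ and $\mathit{LN_2}$ apply only a position-wise positive scaling; the constructions above can be calibrated so that, after this scaling is absorbed (or removed via the LayerNorm trick of \cref{sec:comparison}), the output pair lands on the intended codeword. The residual connection is handled by writing into fresh dimensions that were zero before the block, so $\mathit{FFN}(A')+A'$ simply deposits the new value without corrupting existing rows, and the negated-row convention keeps every feature vector zero-mean so the outer LayerNorm again contributes only a scaling. Finally, since the input coordinate sets for the different operations are disjoint in effect and the outputs go to distinct fresh dimension pairs, I can assemble one block whose first linear layer, ReLU, and second linear layer are the block-diagonal concatenations of the per-operation maps, proving that a single transformer block suffices.

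I expect the comparison operation to be the main obstacle: addition, subtraction, negation, and conjunction are affine or single-threshold maps that interact benignly with LayerNorm, whereas comparison requires both discarding the $1/(i+1)$ position scaling (which is exactly what LayerNorm can be made to do on a zero-mean pair) and thresholding a difference that may be arbitrarily small in magnitude, so the ReLU argument must be set up carefully to guarantee the output is an \emph{exact} Boolean codeword rather than an approximation.
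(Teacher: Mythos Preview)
Your treatment of addition, subtraction, negation, and conjunction is essentially correct and matches the paper's approach (the paper happens to realize $\land$ via $\min/\max$ rather than your threshold, but both are fine two-layer ReLU constructions). The real content of the lemma is the comparison case, and there your proposal has a genuine gap.

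Your sketch for $\leq$ is ``LayerNorm to strip the $1/(i+1)$ scaling, then ReLU to threshold.'' This is not what the paper does, and as stated it would not work. LayerNorm acts on the entire $d$-dimensional vector at a position, not on a single zero-mean pair; you cannot use it to recover the unscaled integer $C_2(i)-C_1(i)$ from $\tfrac{C_2(i)-C_1(i)}{i+1}$ while leaving the other rows intact, and a bare ReLU cannot turn an input of arbitrarily small (position-dependent) magnitude into an \emph{exact} $\pm 1$ codeword. The paper's mechanism runs in the opposite order and relies on three ingredients you do not mention: (i) the counts are integers, so $|C_2(i)-C_1(i)|\ge 1$ whenever it is nonzero, which creates a usable margin; (ii) a separate dimension holds $\tfrac{1}{i+1}$ (obtained by counting the \BOS{} position), giving a position-dependent scale reference; (iii) the FFN computes a piecewise-linear clip function that, using (i) and (ii), maps \emph{every} coordinate to exactly $\pm\tfrac{0.5}{i+1}$, after which the post-FFN LayerNorm rescales the whole vector to $\pm 1$.

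This also contradicts your claim that ``writing into fresh dimensions \ldots\ deposits the new value without corrupting existing rows.'' The paper's comparison step deliberately overwrites all rows with $\pm\tfrac{0.5}{i+1}$ so that LayerNorm can normalize uniformly; as an explicit side effect, previously stored count values are destroyed (Boolean values survive because their signs are preserved). So the comparison block is not compatible with the clean block-diagonal, fresh-dimension picture you paint for the other operations, and the overall proof needs to be organized so that counts are not required after a comparison is performed.
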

    \begin{proof} See \cref{sec:ffn_proofs}. \end{proof}

\subsection{Compiling $\KC$ formulas into masked uniform attention transformers}

\begin{definition}
   Fix an alphabet $\Sigma$, and assume that the symbol $\BOS$ is not in $\Sigma$. We say a masked \SAT{} $T$ (as a composition of blocks $T=B_b\circ \ldots \circ B_1\circ \mathit{WE}$) with $d$ dimensions \emph{simulates} a $\KC$ formula $\phi$ if for every input $w\in \Sigma^*$ with length $n$ and every subformula $\psi_k$ of $\phi$, there is some dimension $d_k$ such that
        \begin{align*}
                        [T(\BOS\cdot w)]_{2d_k-1:2d_k,i+1} &= \begin{cases}
                        \begin{bmatrix} -1 \\ +1\end{bmatrix} &\text{if $w,i \vDash \psi_k$} \\
                        \\
                        \begin{bmatrix} +1 \\ -1\end{bmatrix} &\text{otherwise.}
                        \end{cases}\\
        \end{align*} 
\end{definition}

\begin{theorem}\label{thm:KCtoTransformer}
        For every $\KC$ formula $\phi$, there exists a \SAT{} which simulates $\phi$. Moreover, the transformer will have at most $\md(\phi)$ blocks.
\end{theorem}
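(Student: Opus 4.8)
The plan is to induct on the structure of $\phi$, but to organize the induction by \emph{modal depth} so that each level of $\sinceCO$-nesting consumes exactly one transformer block. Recall that every block applies self-attention first and a position-wise feed-forward network second; I will use the self-attention sublayer to carry out one layer of counting and the feed-forward sublayer to carry out the arithmetic, comparison, and Boolean post-processing that sits above those counts. Concretely, for each subformula $\psi$ and subterm $C$ of $\phi$ I reserve a dedicated pair of dimensions of the residual stream, in the two-row format fixed by the simulation definition, and I maintain the invariant that after block $t$ every subformula and subterm of modal depth at most $t$ is stored correctly.

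For the base case and preprocessing, the word embedding already places each $Q_a$ in its two-row format, and any modal-depth-$0$ subformula is a Boolean combination of the $Q_a$ at a single position, hence a function of $w_i$ alone that the word embedding can encode directly; this also lets me set the $\BOS$ position to False there (or via \cref{lem:BOS-handling}, folded into the first block). So the invariant holds before block $1$, with $0$ blocks used when $\md(\phi)=0$.

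For the inductive step (block $t$, for $1 \le t \le \md(\phi)$), by \cref{lem:scale-free-count} the self-attention sublayer computes, in parallel across fresh dimension-pairs, every count term $\sinceC{F}$ with $\md(F) = t-1$; since $F$ has modal depth $t-1$ it is already available by the invariant, and uniform future-masked attention returns exactly $C_{k,i}/(i+1)$ in the required scaled form. The feed-forward sublayer of the same block then, by \cref{lem:ffn}, (i) forms the composite count terms of modal depth $t$ from these counts using $+$ and $-$, (ii) evaluates the comparisons $C_1 \le C_2$ of modal depth $t$, and (iii) takes Boolean combinations to build the formulas of modal depth $t$. These post-processing steps are independent across target dimensions and compose to constant depth, so a single feed-forward network realizes them all, restoring the invariant at level $t$. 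Since $\phi$ is itself a formula of modal depth $\md(\phi)$, its value appears after block $\md(\phi)$ in the format demanded by the simulation definition, and the construction uses at most $\md(\phi)$ blocks.

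The step I expect to be the main obstacle is the bookkeeping that makes the block count exactly $\md(\phi)$ rather than larger: one must verify that the within-block ordering lines up, namely that a count $\sinceC{C_1 \le C_2}$ whose argument is a comparison of level-$(t-1)$ counts is correctly produced, because the comparison is made available by the feed-forward sublayer of block $t-1$ and then consumed by the self-attention sublayer of block $t$. I also need to check two representation details that the earlier lemmas are set up to handle: that the two-row, zero-mean encoding keeps LayerNorm a harmless per-position rescaling, and that the $1/(i+1)$ factor introduced by attention averaging is exactly the factor carried in the count-term representation, so that no position embeddings are needed to undo it.
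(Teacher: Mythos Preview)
Your approach is the same as the paper's---induction on modal depth, with each level consuming one block whose self-attention performs the new counts and whose feed-forward performs the arithmetic, comparisons, and Boolean combinations---and your bookkeeping concern about the self-attention/feed-forward ordering across consecutive blocks is exactly right.

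There is one gap worth fixing. You say block $t$'s self-attention computes $\sinceC{F}$ only for $\md(F)=t-1$, and then the feed-forward ``forms the composite count terms of modal depth $t$ from these counts.'' But a depth-$t$ comparison such as $\sinceC{\psi_1}\le\sinceC{\psi_2}$ may have $\md(\psi_1)<t-1$ while $\md(\psi_2)=t-1$, so $\sinceC{\psi_1}$ is not among ``these counts.'' You might hope it survives from an earlier block, but it does not: the comparison construction of \cref{lem:ffn} truncates \emph{every} coordinate of the residual stream to $\pm\tfrac{0.5}{i+1}$ and then uses LayerNorm to renormalize to $\pm1$, which is precisely what preserves the Boolean slots but wipes out all previously stored counts. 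The paper handles this by rewriting every depth-$(m{+}1)$ comparison in the normal form $\sum_k a_k\,\sinceC{\psi_k}\ge 0$ and then, in the new block's self-attention, freshly computing $\sinceC{\psi_k}$ for \emph{all relevant} $\psi_k$ (of depth $\le m$), which is possible because their Boolean values are preserved. Your argument goes through once you change ``$\md(F)=t-1$'' to ``$\md(F)\le t-1$, for every $F$ that occurs directly under a $\sinceCO$ in some depth-$t$ subterm,'' and add count destruction to your list of representation details to check.
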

    \begin{proof}
    We induct on the modal depth of $\phi$. If $\phi$ is of modal depth $0$, it must be a Boolean combination of $Q_\sigma$ formulas. This can be simulated in the word embedding.

    For the inductive step, let $\phi$ be a $\KC$ formula of modal depth $m+1$. By \cref{sec:modal-depth}, $\phi$ is a Boolean combination of:
    \begin{compactitem}
        \item Subformulas of modal depth at most $m$.
        \item Subformulas of the form $\sum_{k\in K} a_k \#[{\psi_k}]\geq 0$, where $K$ is a set of indices, $a_k$ are integers, and $\psi_k$ are subformulas of modal depth $m$.
    \end{compactitem}
    By the inductive hypothesis, for each subformula $\psi_k$ of modal depth at most $m$, there is a transformer $T_k$ which simulates it. Parallel-compose all the $T_k$ as described by \cref{sec:transformer-compose-proof} into a single transformer.
    Then we need to perform the following operations in sequence:
    \begin{compactenum}
        \item Compute $\#[{\psi_k}]$ for all relevant $\psi_k$, as described in \cref{lem:scale-free-count}.
        \item Compute all formulas of the form $\sum_{k\in K} a_k \#[{\psi_k}]\geq 0$, as described in \cref{sec:comparison}.
        \item Compute all Boolean combinations of the above subformulas as necessary.
        \item Ensure the $\BOS$ position is False.
    \end{compactenum}

  This can be achieved by adding one block. The first  step can be achieved with a self-attention layer. We've described how to compute each of the next three steps individually using a feed-forward layer, but their composition can also be performed with a single feed-forward layer.
\end{proof}

\section{Relationship to other formalisms}
\label{sec:other_formalisms}

    The expressivity of $\KC$ can be characterized in relation to other bounds on transformer formal expressivity. To begin with, the previous best lower bound on \SAT{} encoders was $\FOCplusMOD$, by \citet{chiang-2023-tighter}. However, $\KC$ forms a tighter lower bound due to its ability to model order. 

    \begin{lemma}\label{lem:foc-lower}
        $\KC$ formulas can simulate \emph{\FOCplus} formulas, and $\mathsf{K}_t[\sinceCO;\mathsf{MOD}]$ can simulate \emph{\FOCplusMOD}. In general, for any set $\mathcal{P}$ of unary predicates, the extension $\mathsf{FOC}[+;\mathcal{P}]$ is contained in $\mathsf{K}_{\text{t}}[\textbf{\#};\mathcal{P}]$.
    \end{lemma}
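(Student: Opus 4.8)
The plan is to prove the general containment $\mathsf{FOC}[+;\mathcal{P}]\subseteq\mathsf{K}_{\text{t}}[\sinceCO;\mathcal{P}]$ directly; the two named cases then follow by specializing $\mathcal{P}$, taking $\mathcal{P}=\emptyset$ to get $\FOCplus\subseteq\KC$ and $\mathcal{P}=\MOD$ to get $\FOCplusMOD\subseteq\mathsf{K}_{\text{t}}[\sinceCO;\MOD]$. Because every predicate in $\mathcal{P}$ is unary, it is available on both sides and translates to itself: a $\KC$ atom $P$ interpreted at the current position behaves exactly as the FOC predicate $P(x)$ interpreted at $x$. So the real work is translating the quantifier, counting, order, and addition machinery of FOC into the count terms of $\KC$, exploiting the fact that end-satisfaction pins the evaluation point to the last position $n$.

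The key observation is that at the end position $\KC$ counting is global: for any formula $F$ we have $\sinceC{F}^{w,n}=|\{j\in[1,n]\mid w,j\vDash F\}|$, which is exactly the quantity an FOC counting quantifier measures. I would therefore first invoke the normal form of \citet{chiang-2023-tighter}, which rewrites any $\mathsf{FOC}[+;\mathcal{P}]$ sentence using a single quantifier alternation and a single position variable, i.e. as a Boolean combination of arithmetic comparisons between counts of the shape $|\{p:\beta(p)\}|$, where each $\beta(p)$ is a formula in the one position variable $p$. Each such $\beta(p)$ is a property of a single position and translates to a $\KC$ formula $F_\beta$: label tests $Q_a(p)$ and the $\mathcal{P}$-predicates map to $\KC$ atoms, while numeric constraints on the value of $p$ are expressed through $\sinceC{\top}$, since $\sinceC{\top}^{w,i}=i$ (so $p\le c$ becomes $\sinceC{\top}\le c$). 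Then $|\{p:\beta(p)\}|$ becomes the count term $\sinceC{F_\beta}$, which is global when read at position $n$; addition and comparison of counts become $C+C$ and $C\le C$; and the outer Boolean structure maps directly. The result is a $\KC$ formula that end-satisfies $w$ exactly when $w$ models the original sentence, and correctness is checked by comparing the two semantics at position $n$.

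The step I expect to be the main obstacle is faithfully handling $+$ together with the counting quantifier, i.e. making precise how the value of the counted variable $p$ and the count values interact inside a single backward-looking count term. Because $\KC$ is past-only, a count term nested inside another ranges over a prefix $[1,i]$ rather than the whole string, so any order-sensitivity or position-arithmetic that the normal form exposes must be reconstructed from prefix counts rather than assumed available globally. Verifying that the single-position-variable normal form never forces a genuinely ``forward'' or two-position reference---and hence that every counted predicate $\beta(p)$ really is $\KC$-definable as a position property---is the crux. This is also where the matched predicate set $\mathcal{P}$ on both sides is essential, since predicates such as $\MOD$ must be primitive rather than derived. Once this is settled, the Boolean-combination and arithmetic layers are routine.
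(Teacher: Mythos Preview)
Your proposal is correct and follows essentially the same route as the paper: invoke the normal form of \citet{chiang-2023-tighter} and replace each count $|\{p:\psi_i(p)\}|$ by the term $\sinceC{\psi_i}$ read at the last position, then carry over the linear constraint $\chi$ verbatim. The obstacle you flag does not actually arise, because in that normal form every $\psi_i$ is already \emph{quantifier-free}---hence a Boolean combination of $Q_a$ and $\mathcal{P}$-atoms in the single variable $p$, with no nested counts, no order, and no forward reference---so the translation is immediate and the paper's proof is correspondingly two lines.
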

    \begin{proof}
        See \cref{sec:crasp-equals-kc}.
    \end{proof}

    We've seen that $\CRASP$ can define $\text{Dyck-1}$ (\cref{ex:dyck}), and \citet{bhattamishra-etal-2020-ability} prove that transformers, under the same assumption as ours, can express Dyck-1 via a very similar construction. 
    However, it's easy to show that $\FOCplusMOD$ cannot define $\text{Dyck-1}$. Thus, $\KC$ is a more realistic lower bound for transformers than $\FOCplusMOD$:
    \begin{proposition}
    There is no sentence of $\FOCplusMOD$ that defines Dyck-1.
    \end{proposition}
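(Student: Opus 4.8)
The plan is to exploit the one-variable normal form of $\FOCplusMOD$ recalled in the introduction: every sentence is equivalent to a Boolean combination of arithmetic and modular conditions on counting terms of the form $\#\{p : \psi(p)\}$, where $\psi(p)$ is quantifier-free in the single position variable $p$. Because only one position variable is available (with order as the only built-in numerical relation, and no successor function to a second variable), such a $\psi(p)$ can only inspect the symbol occurring at $p$ together with whether $p$ is the first or last position; it cannot refer to a second position, to the parity of $p$, or to any count taken relative to $p$. Consequently the value of each counting term $\#\{p:\psi(p)\}$ on an input $w$ is determined entirely by the Parikh image of $w$ (the number of occurrences of each alphabet symbol) together with the identities of $w$'s first and last symbols. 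Since every atomic condition of the normal form is a function of these counts, the truth value of any $\FOCplusMOD$ sentence on $w$ depends only on this finite amount of data.

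First I would make this reduction precise by reading off, from \citet{chiang-2023-tighter}, exactly which quantifier-free matrices $\psi(p)$ are permitted, and verifying that each admissible $\psi(p)$ is a Boolean combination of the symbol predicates $Q_\sigma(p)$ and the boundary tests $p=\text{first}$, $p=\text{last}$. The modular counting apparatus $\MOD$ acts only on the resulting counts, and addition ($+$) only combines them linearly, so no new dependence on the \emph{order} of symbols is introduced. This yields the invariance statement I want: if two strings $w$ and $w'$ have the same Parikh image and the same first and last symbols, then $w$ and $w'$ agree on every $\FOCplusMOD$ sentence.

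With the invariance in hand the refutation is immediate. I would exhibit the two strings $s = \texttt{((()))}$ and $t = \texttt{())(()}$. Both contain exactly three occurrences of $($ and three of $)$, and both begin with $($ and end with $)$, so by the invariance no $\FOCplusMOD$ sentence can tell them apart. Yet $s$ lies in Dyck-1 (all of its prefix sums are nonnegative and the total is balanced), while $t$ does not, since its length-three prefix $\texttt{())}$ already contains more $)$ than $($. Hence a $\FOCplusMOD$ sentence defining Dyck-1 would have to agree with membership on both $s$ and $t$, which is impossible.

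The main obstacle is concentrated entirely in the first step: pinning down the admissible matrices $\psi(p)$ so that the Parikh-plus-boundary invariance genuinely holds. In particular one must confirm that the single-position-variable restriction forbids every predicate — a successor relation, a position-parity predicate, or a prefix-count comparison — that would otherwise let some count distinguish $s$ from $t$. Once the normal form is quoted at this level of detail, the counting argument and the explicit witnesses $s$ and $t$ complete the proof with no further computation.
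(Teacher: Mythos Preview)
Your proof has a genuine gap: you have misread what $\MOD$ does in $\FOCplusMOD$. The $\MOD$ predicates are \emph{positional} predicates, not modular tests on the counts. In the paper's own words (see the proof of \cref{lem:foc-lower}), $(w,i)\vDash \mathsf{MOD}_m^k \iff i\equiv k\pmod m$. Hence the quantifier-free matrices $\psi(p)$ in the normal form may freely refer to the residue class of the position $p$ modulo any fixed $m$, and your claimed invariance (truth value depends only on Parikh image plus first/last symbols) is false. Concretely, your witnesses $s=\texttt{((()))}$ and $t=\texttt{())(()}$ are distinguished by the $\FOCplusMOD$-definable count
\[
\exists^{=x}p.\bigl(Q_{(}(p)\wedge \mathsf{MOD}_3^1(p)\bigr),
\]
which evaluates to $x=1$ on $s$ (positions $1,4$ carry $(,\,)$) but to $x=2$ on $t$ (positions $1,4$ carry $(,\,($). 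So the very predicate you flag as something to ``confirm is forbidden'' (a position-parity/residue test) is in fact present, and it separates your two strings.

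The paper's argument avoids this trap by invoking the correct invariance from \citet{chiang-2023-tighter}: for any fixed $\FOCplusMOD$ sentence there is an $M$ (a common multiple of all moduli appearing) such that swapping two symbols exactly $M$ positions apart cannot change the truth value, since positions $j$ and $j+M$ agree on every $\mathsf{MOD}_m^k$ predicate used. The paper then swaps positions $1$ and $M{+}1$ in $(^M)^M$ to obtain a non-Dyck string that the sentence cannot distinguish. To repair your approach you would need an invariance that survives the modular position predicates; the Parikh-plus-endpoints invariant does not.
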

    \begin{proof}
    Suppose that such a sentence $\sigma$ exists. \Citet{chiang-2023-tighter} show that there exists an $M$ such that $\sigma$ cannot distinguish between two strings that differ only by swapping symbols exactly $M$ positions apart.
    Since $(^M)^M \in \text{Dyck-1}$ and $)(^{M-1}()^{M-1} \not\in \text{Dyck-1}$ but $\sigma$ cannot distinguish them, we have a contradiction.
    \end{proof}

    On the other hand, it does not seem that $\KC$ can express Dyck-2, so a treatment of this language would be an informative target for future work. We note that \citet{yao2021self} shows transformers can recognize Dyck-2, but using hard attention. 

    Indeed, reconciling the $\KC$ bound on softmax transformers with other bounds on hard-attention transformers poses challenges. For instance, $\KC$ seems to be incomparable with \FO{} \citep{angluin2023masked}. $\KC$ can express Dyck-1 but \FO{} cannot, and \FO{} can express $\Sigma^*aa\Sigma^*$, which $\KC$ does not seem able to do. 
    
    Furthermore, we also note that $\KC$ is a strict subset of $\textsf{LTL}$(\textsf{C},+) which \citet{barcelo2023logical} show is a lower bound on average-hard attention transformer expressivity. However, average-hard attention transformers are not known to be either a subset or superset of standard, soft-attention transformers. 
    More comments on hard and soft attention can be found in \cref{sec:hard_vs_soft}.
    Understanding the precise connection between AHATs and \SATs{} is left for future exploration, and is expected to require modifications to the standard architecture in order to derive an exact inclusion.

    As a final note, there appears to be a sizeable gap between the lower bound of $\KC$ and the upper bound of $\textsf{FOM}[\textsf{BIT}]$ shown in \citet{merrill2023logic}. In particular, the latter can express integer multiplication, but we suspect $\KC$ cannot. Incidentally, transformer models are observed to have difficulty with integer multiplication \citep{dziri2024faith}, so harmonizing these theoretical and empirical results would be helpful to paint a clearer picture of transformer expressivity.

\section{From Fixed-Precision Transformers to $\KC$}

In practice, transformers use fixed-precision numbers. We adapt the proof that $\FOCplus$ can simulate fixed-precision \SATs{} \citep{chiang-2023-tighter} to show that $\KC$ can simulate fixed-precision, masked \SATs.
If $\KC$ is extended with modular predicates, it can also simulate sinusoidal positional encodings.

\begin{theorem}\label{thm:transformers-to-KC}
    $\KC$ can simulate fixed-precision masked \SATs{} without positional encodings, and $\KCmod$ can simulate fixed-precision masked \SATs{} with sinusoidal positional encodings.
\end{theorem}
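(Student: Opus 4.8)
The plan is to adapt the argument of \citet{chiang-2023-tighter}, who translates any fixed-precision \SAT{} into an equivalent $\FOCplusMOD$ sentence, to the masked setting, and to observe that future-masking is exactly what lets us replace global counting quantifiers by the past-counting operator $\sinceCO$ of $\KC$. Because the transformer is fixed precision, at every layer the activation vector at a position ranges over a finite set $S_\ell$. I would prove, by induction on the number of blocks $\ell$, the invariant that for each $\vec a \in S_\ell$ there is a formula $\Phi_{\ell,\vec a}$ with $w,i \vDash \Phi_{\ell,\vec a}$ iff the layer-$\ell$ activation at position $i$ equals $\vec a$. Applying this at the final layer and taking the disjunction over accepting activations, read off at the last position, matches end-satisfaction and yields the desired formula.

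\emph{Base case.} Without positional encodings the layer-$0$ activation at position $i$ is $\mathit{WE}(w_i)$, determined solely by $w_i$, so each $\Phi_{0,\vec a}$ is the disjunction $\bigvee_{a : \mathit{WE}(a) = \vec a} Q_a$, using neither $\sinceCO$ nor $\MOD$. For the sinusoidal case I invoke Chiang's observation that a fixed-precision sinusoid is eventually periodic in $i$ with some period $p$, so ``position $i$ carries positional value $\vec v$'' is a finite union of residue classes $i \equiv r \pmod{p}$, expressible with $\MOD$ predicates; conjoining these with the $Q_a$ disjunctions defines $\Phi_{0,\vec a}$ in $\KCmod$.

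\emph{Inductive step.} Assume the invariant at layer $\ell$. In the self-attention sublayer the score $s_{ij}$ is a fixed-precision function of the types of $i$ and $j$, hence takes finitely many values whose exponentials, once both types are fixed, are fixed dyadic-rational constants. Since the transformer is future-masked, the attention output at $i$ is the normalized weighted average of value vectors over positions $j \leq i$. Grouping those positions by their layer-$\ell$ type $t$ and writing $n_t := \sinceC{\Phi_{\ell,t}}$ for the past count of type $t$ -- which is exactly a $\KC$ count term and exactly what masking supplies -- the attention output at $i$ is (coordinatewise) a ratio of linear forms in the $n_t$ whose coefficients depend only on the query type $q$ of $i$. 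As the output is then rounded to fixed precision, the event ``attention output at $i$ equals $\vec b$'' is a Boolean combination of threshold inequalities $\sum_t \alpha_t n_t \leq \sum_t \beta_t n_t$ with integer coefficients, obtained by cross-multiplying (valid since the softmax normalizer is strictly positive) and clearing dyadic denominators. Each such inequality is a $\KC$ count comparison, since integer scalar multiples are repeated sums and negative coefficients may be moved across $\leq$; conjoining with $\Phi_{\ell,q}(i)$ to fix the query type defines each post-attention-plus-residual type. The feed-forward sublayer, residual connections, and both LayerNorms are position-wise maps on a finite type set, so the layer-$(\ell+1)$ type is a function of the intermediate type, defined by the corresponding finite disjunction. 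This closes the induction, yielding $\KC$ (no $\MOD$) in the encoding-free case and $\KCmod$ in the sinusoidal case.

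The main obstacle is the attention step: justifying that fixed-precision softmax-averaging over the masked window collapses to a Boolean combination of integer-linear inequalities in the type counts. This rests on finiteness of the fixed-precision value set (so exponentiated scores are constants and the rounded output is piecewise constant in the counts), on positivity of the denominator (so cross-multiplication preserves inequality direction), and on the masking identity tying the transformer's $j \leq i$ window to the semantics of $\sinceCO$. Care is also needed to confirm, as in \citet{chiang-2023-tighter}, that the specific fixed-precision summation and rounding model for attention still gives a piecewise-constant dependence on the counts; assuming their model, the translation goes through with counting quantifiers replaced by $\sinceCO$, and the general containment $\mathsf{FOC}[+;\mathcal{P}] \subseteq \mathsf{K}_{\text{t}}[\sinceCO;\mathcal{P}]$ of \cref{lem:foc-lower} confirms that adding $\MOD$ on the logic side is precisely what adding sinusoidal encodings on the transformer side requires.
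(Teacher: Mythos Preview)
Your proposal is correct and follows essentially the same approach as the paper: induct on layers, treat position-wise sublayers by finite case analysis on the fixed-precision value set, and handle masked attention by fixing the query type, rearranging the rounded quotient into linear threshold inequalities on past counts, and expressing those counts via $\sinceCO$. The only cosmetic difference is that the paper decomposes the attention sums bit-by-bit (writing $\sum_{j\le i}X_j$ as a weighted sum of $\sinceC{F^b}$ over bit predicates $F^b$) rather than grouping by full activation type as you do, but both reductions yield integer-linear combinations of $\sinceCO$ terms and the rest of the argument is the same.
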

\begin{proof}
    See \cref{sec:transformers_to_kc}.
\end{proof}

\section{Autoregressive $\CRASP$ Programs}

With a simple extension, we can use $\CRASP$ to construct decoder language models.

\begin{definition}[$\CRASP$ Language Model]
    Assume a $\CRASP$ program over alphabet $\Sigma$ has vectors $N_a(i)$ defined for each $a\in \Sigma\cup\{\EOS\}$. Then a $\CRASP$ program can be converted to an autoregressive language model with greedy decoding in the following manner

    \begin{compactenum}
        \item Translate the program into a transformer. 
        \item Append a linear layer, which selects the dimensions which simulate $N_a(i)$ 
        \item Consider last position as a uniform probability distribution: dimensions with $N_a(|w|)=1$ will have the same maximum probability score, and all those with $N_a(|w|)=0$ will have the minimum probability. 
        \item Run the transformer on $w$, and then select an $a$ such that the $N_a$ dimension holds the maximum probability, with arbitrary tie-breaking. Append $a$ to $w$, and repeat until $\EOS$ is selected.
    \end{compactenum}

    We say a $\CRASP$ language model assigns nonzero probability $p$ to word $w=w_0\cdots w_{n-1}$ iff $w,i\vDash N_{w_{i+1}}$ for all $0\leq i <n-1$, and $w,n-1\vDash N_\EOS$. We say a $\CRASP$ Language Model recognizes language $L$ whenever it assigns nonzero probability to $w$ iff $w\in L$. 
\end{definition}

    In logical terms, we can construct a $\CRASP$ language model to recognize $L$ iff there exists a formula $\phi$ of $\KC$ which recognizes $L$ and for all $a\in \Sigma$ we can define formulas $N_a$ such that $w\vDash N_a \iff \exists w'.waw'\vDash \phi$. In this case, $N_\EOS=\phi$

    \begin{example}
        Append to the end of Dyck program in \cref{ex:dyck} the following operations:
        \vspace{-0.5\baselineskip} 
        \begin{align*}
            N_((i) &:= \lnot Q_\EOS(i)\\
            N_)(i) &:= \lnot Q_\EOS(i)\land C_)(i)<C_((i) \\
            N_\EOS(i) &:= D(i)\\
        \end{align*}
    
    \end{example}

\begin{corollary}
    For every piecewise testable language $L$, there exists a $\CRASP$ Language Model which recognizes $L$. 
\end{corollary}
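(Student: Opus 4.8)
The plan is to reduce the corollary to the logical characterization stated just before it: a $\CRASP$ language model recognizing $L$ exists precisely when there is a $\KC$ formula $\phi$ that recognizes $L$ together with, for every $a\in\Sigma$, a $\KC$ formula $N_a$ defining $\{w : \exists w'.\ waw' \vDash \phi\}$ (taking $N_\EOS=\phi$). Thus it suffices to (i) exhibit $\phi$ recognizing the piecewise testable language $L$, and (ii) show that each of the languages defined by the $N_a$ is again $\KC$-definable.

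Step (i) is immediate from the tools already in hand. A piecewise testable $L$ is by definition a Boolean combination of subsequence-containment languages $L_s=\{w : s \text{ is a subsequence of } w\}$, together with the base cases $\Sigma^*$ and $\emptyset$. \cref{lem:piecewise-testing} supplies, for each $s$, a $\KC$ formula $\phi_s$ with $w\vDash\phi_s$ iff $w\in L_s$, while $\Sigma^*$ and $\emptyset$ are defined by $\top$ and $\lnot\top$. Since $\KC$ is closed under $\land$ and $\lnot$, reproducing the same Boolean combination yields a single formula $\phi$ with $w\vDash\phi\iff w\in L$; we set $N_\EOS=\phi$.

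For step (ii), note that the language defined by $N_a$ is $\{w : \exists w'.\ waw' \in L\}$, i.e.\ the set of $w$ such that $wa$ is a prefix of some string in $L$. I claim this language is again piecewise testable, at the same threshold $k$ as $L$, and hence $\KC$-definable by the argument of step (i). Write $\mathrm{sub}_k(u)$ for the set of subsequences of $u$ of length at most $k$. The key combinatorial fact is that $\mathrm{sub}_k(ub)$ is a function of $\mathrm{sub}_k(u)$ and the appended letter $b$ alone, namely
\[ \mathrm{sub}_k(ub) = \mathrm{sub}_k(u)\ \cup\ \{\,x b : x\in\mathrm{sub}_k(u),\ |x|\le k-1\,\}. \]
Iterating, $\mathrm{sub}_k(uv)$ is determined by $\mathrm{sub}_k(u)$ and $v$. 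Since a $k$-piecewise testable language is a union of classes of the equivalence that identifies strings with equal $\mathrm{sub}_k$, it follows that for any two strings $w_1,w_2$ with $\mathrm{sub}_k(w_1)=\mathrm{sub}_k(w_2)$ we have $\mathrm{sub}_k(w_1 a w')=\mathrm{sub}_k(w_2 a w')$ for every $w'$, hence $w_1 a w'\in L\iff w_2 a w'\in L$. Therefore $w_1$ and $w_2$ agree on membership in $\{w:\exists w'.\ waw'\in L\}$, so this language is $k$-piecewise testable and thus $\KC$-definable.

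The main obstacle is exactly this closure claim in step (ii); everything else is bookkeeping. Once it is established, step (ii) gives the required formulas $N_a$, step (i) gives $\phi=N_\EOS$, and the preceding characterization (together with \cref{thm:KCtoTransformer} for the underlying translation) assembles them into a $\CRASP$ language model that recognizes $L$. One could alternatively phrase (ii) as closure of the piecewise testable languages under prefix closure followed by right quotient by the letter $a$, but the subsequence-profile argument above is self-contained and avoids appealing to external closure results.
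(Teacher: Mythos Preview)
Your proof is correct, but it takes a substantially longer route than the paper's. The paper simply sets $N_a=\top$ for every $a\in\Sigma$ and $N_\EOS=\phi$, observing that with these choices the definition of ``assigns nonzero probability'' collapses to the single condition $w,n{-}1\vDash\phi$, i.e.\ $w\in L$; no closure argument about piecewise testable languages is needed. Your step~(ii), proving that $\{w:\exists w'.\,waw'\in L\}$ is again $k$-piecewise testable via the $\mathrm{sub}_k$-profile argument, is a sound and self-contained way to satisfy the logical characterization literally, but it is more work than the corollary requires.

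That said, your extra work is not wasted: the paper's one-line justification (``observe that it is always the case that $\exists w'.\,waw'\vDash\phi$'') is actually false for general piecewise testable $L$---take $L=b^*$ over $\{a,b\}$, where no extension of $a$ lies in $L$---so $N_a=\top$ does \emph{not} satisfy the stated biconditional $w\vDash N_a\iff\exists w'.\,waw'\vDash\phi$. The choice $N_a=\top$ nonetheless yields a correct language model because the definition of ``recognizes'' only requires that nonzero probability coincide with membership in $L$, and the intermediate $N_a$ conditions are trivially met. Your approach is the one that makes the characterization go through as written; the paper's approach is shorter but leans on the bare definition rather than the characterization it invokes.
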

\begin{proof}
    This is an immediate consequence of \cref{lem:piecewise-testing}. For any piecewise testable language $L$ we can define a formula $\phi$ of $\KC$ which recognizes $L$.  Next, observe that it is always the case that $\exists w'.waw'\vDash \phi$. Finally, define $N_\EOS=\phi$ and $N_a=\top$ for all $a\in\Sigma$.
\end{proof}

\section{Concluding Remarks}

We have introduced the temporal counting logic $\KC$ alongside the RASP variant $\CRASP$ and proved that they are the best-known lower bound on the expressivity of future-masked \SATs, with unbounded input size. Unlike previous results, we have made minimal extra assumptions about transformers, so all formulas in $\KC$ can compile directly into standard transformers. As such, an implementation of $\CRASP$ should prove appropriate for constructing transformers to run experiments on, and further theoretical analysis of $\KC$ and its extensions should shed light on the expressive power of transformers. 

\section{Acknowledgements}

We would like to thank Dana Angluin, Pablo Barceló, Stephen Bothwell, Peter Cholak, Dakotah Lambert, Anthony Widjaja Lin, Anand Pillay, Jon Rawski, Darcey Riley, Ken Sible, Aarohi Srivastava, Lena Strobl, and Chihiro Taguchi for their feedback and support.

\bibliography{references}
\appendix

\section{Proofs Related to $\CRASP$}

\subsection{Extensions}

\begin{lemma}\label{lem:crasp-binary-to-unary}
    Consider an extension of $\CRASP$ in which the counting operation has the form

    \[C(i) := \cttntwo{j\leq i}{F(i,j)}\]

    which allows $F$ to be a Boolean combination of $\CRASP$ operations $P(i)$ and $P(j)$ evaluated at both $i$ and $j$. Any program with this extended operation is actually equivalent to a $\CRASP$ program with only the normal counting operation.
\end{lemma}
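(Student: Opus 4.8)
The plan is to reduce the binary counting operation to ordinary unary counting by putting the predicate $F(i,j)$ into a normal form that \emph{separates} its dependence on $i$ from its dependence on $j$, and then gating a unary count by the $i$-dependent part. Since $F(i,j)$ is a Boolean combination of atoms of the form $P(i)$ and $P(j)$ for Boolean operations $P$ already computed earlier in the program, first I would rewrite $F$ in its full disjunctive normal form over these atoms, i.e.\ as a disjunction of \emph{minterms}, each of which assigns a truth value to every atom appearing in $F$. This lets me write $F(i,j) \equiv \bigvee_{k}\big(\alpha_k(i)\land\beta_k(j)\big)$, where $\alpha_k$ collects the literals evaluated at $i$ and $\beta_k$ collects the literals evaluated at $j$. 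This factorization is possible precisely because each minterm is a single conjunction, so its atoms partition cleanly according to whether they are evaluated at $i$ or at $j$.

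The crucial feature of the minterm expansion is that distinct minterms are \emph{mutually exclusive}: for any fixed $i$ and $j$, at most one disjunct $\alpha_k(i)\land\beta_k(j)$ can hold. Consequently the count of $j\leq i$ satisfying $F(i,j)$ is the \emph{sum} over $k$ of the counts of $j\leq i$ satisfying the individual disjuncts, with no double-counting. For a fixed $i$, the factor $\alpha_k(i)$ does not depend on $j$, so the number of $j\leq i$ with $\alpha_k(i)\land\beta_k(j)$ equals $|\{j\leq i : \beta_k(j)\}|$ when $\alpha_k(i)$ holds and $0$ otherwise.

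This last observation translates directly into $\CRASP$ primitives. For each $k$ I would define a Boolean operation $B_k$ computing $\beta_k$ (a Boolean combination of existing predicates, hence legal) and a Boolean operation $A_k$ computing $\alpha_k$. The unary count $C'_k(i) := \cttn{j\leq i}{B_k(j)}$ is then a standard counting operation, and the gated quantity is obtained with a conditional, $C_k(i) := \cif{A_k(i)}{C'_k(i)}{0}$, where the constant $0$ is itself definable (e.g.\ as $1-1$ via the subtraction and constant operations). Finally I would sum these with the addition operation to obtain $C(i) := \sum_k C_k(i)$, which equals $\cttntwo{j\leq i}{F(i,j)}$ at every position. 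Performing this replacement for each extended counting operation in turn, while leaving every other operation untouched, yields an equivalent program that uses only the ordinary counting operation.

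The hard part is ensuring that the decomposition does not overcount: a naive DNF whose disjuncts overlap would count some $j$ several times and break the equality. This is exactly why I insist on the full minterm expansion, whose mutual exclusivity guarantees that the sum of the per-disjunct counts equals the count of the disjunction. The remaining steps—separating the $i$- and $j$-literals, gating by a conditional, and summing—are routine bookkeeping given the $\CRASP$ syntax.
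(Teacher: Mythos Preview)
Your proof is correct and follows essentially the same skeleton as the paper: write $F(i,j)$ in DNF, separate each conjunct into its $i$-literals and $j$-literals, replace the binary count of each conjunct by a unary count of the $j$-part gated by a conditional on the $i$-part, and sum. The one substantive difference is how you deal with overlapping disjuncts. You insist on the \emph{full} minterm expansion so that the disjuncts are mutually exclusive and the counts simply add; the paper instead allows an arbitrary DNF and handles overlaps by inclusion--exclusion, using $\cttntwo{j\leq i}{F_1\lor F_2}=\cttntwo{j\leq i}{F_1}+\cttntwo{j\leq i}{F_2}-\cttntwo{j\leq i}{F_1\land F_2}$ recursively. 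Your route avoids the inclusion--exclusion bookkeeping at the price of a potentially larger (exponential in the number of atoms) but still finite blow-up; the paper's route keeps the DNF smaller but has to iterate the splitting identity. Both are valid and arrive at the same final gadget $\cif{\alpha(i)}{\cttn{j\leq i}{\beta(j)}}{0}$.
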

\begin{proof}
    Let $F$ be a Boolean combination of $P_1,\ldots,P_k$ evaluated at either $i$ or $j$. First, observe that $C(i)=C_1(i)+C_2(i)-C_3(i)$ where
    \begin{align*}
        C(i) &:= \cttntwo{j\leq i}{F_1(i,j)\lor F_2(i,j)}\\
        C_1(i) &: =\cttntwo{j\leq i}{F_1(i,j)}\\
        C_2(i) &:= \cttntwo{j\leq i}{F_2(i,j)}\\
        C_3(i) &:= \cttntwo{j\leq i}{F_1(i,j)\land F_2(i,j)}
    \end{align*}

    Now, write $F$ in DNF and split using the above. Now every single counting operation is of the form 

    \[C(i) := \cttntwo{j\leq i}{\left(\bigwedge_{x\in I} P_x(i) \land\bigwedge_{x\in J} P_x(j)\right)}\]

    Where $I$ and $J$ store the indices of $\CRASP$ operations which depend on $i$ and $j$, respectively, within the counting operation. Observe then that this is equivalent to

    \begin{align*}
        C_\top(i) &:= \cttn{j\leq i}{\left(\bigwedge_{x\in I} P_x(i)\right)}\\
        C_\bot(i) &:= 0\\
        C(i) &:= \cif{\left(\bigwedge_{x\in I} P_x(i)\right)}{C_\top(i)}{C_\bot(i)}\\
    \end{align*}

    Thus, every $\#_2$ operation can be factored out as a sequence of normal $\#$ operations. 
\end{proof}

\subsection{More $\CRASP$ Examples}\label{sec:crasp-examples}

We sometimes put multiple $\CRASP$ operations on the same line for brevity.

\emph{$a^*b^*$ over $\Sigma=\{a,b\}$}
\begin{align*}
    C_a(i) &:= \cttn{j\leq i}{Q_a(j)} && \text{\# positions with $a$'s}\\
    C_b(i) &:= \cttn{j\leq i}{Q_b(j)} && \text{\# positions with $b$'s}\\
    V(i) &:= Q_a(i)\land C_b(i) \geq 1 && \text{\emph{Violation}: an $a$ has $b$'s preceding it}\\
    C_V(i) &:= \cttn{j\leq i}{V(j)} && \text{\# \emph{Violations}}\\
    Y(i) &:= C_V(i)=0 && \text{Zero \emph{Violations}}\\
\end{align*}

\emph{$a^*b^*a^*$ over $\Sigma=\{a,b\}$}
\begin{align*}
    C_a(i) &:= \cttn{j\leq i}{Q_a(j)} && \text{\# positions with $a$'s}\\
    C_b(i) &:= \cttn{j\leq i}{Q_b(j)} && \text{\# positions with $b$'s}\\
    BA(i) &:= Q_a(i)\land C_b(i) \geq 1 && \text{A subsequence $ba$ ends at $i$}\\
    C_{ba}(i) &:= \cttn{j\leq i}{BA(j)} && \text{\# ends of subsequence $ba$}\\
    BAB(i) &:= Q_b(i)\land C_{ba}\geq 1 && \text{the subsequence $bab$ ends at $i$}\\
    C_{bab}(i) &:= \cttn{j\leq i}{BAB(j)} && \text{\# ends of subsequence $bab$}\\
    Y(i) &:= C_{bab}(i)=0 && \text{There are no subsequences $bab$}\\
\end{align*}

\emph{$a^nb^nc^n$ over $\Sigma=\{a,b,c\}$}
\begin{align*}
    C_a(i) &:= \cttn{j\leq i}{Q_a(j)} && \text{\# positions with $a$'s}\\
    C_b(i) &:= \cttn{j\leq i}{Q_b(j)} && \text{\# positions with $b$'s}\\
    C_c(i) &:= \cttn{j\leq i}{Q_c(j)} && \text{\# positions with $c$'s}\\
    A(i) &:= C_b(i) +C_c(i) = 0 && \text{No preceding $b$'s or $c$'s}\\
    B(i) &:= C_c(i)= 0 && \text{No preceding $c$'s}\\
    C_A(i) &:= \cttn{j\leq i}{Q_a(j)\land A(j)}&& \text{\# $a$'s with no preceding $b$'s or $c$'s}\\
    C_B(i) &:= \cttn{j\leq i}{Q_b(j)\land B(j)}&& \text{\# $b$'s with no preceding $c$'s}\\
    G_a(i) &:= C_A(i) = C_a(i) && \text{no $a$'s have preceding $b$'s or $c$'s}\\
    G_b(i) &:= C_B(i) = C_b(i) && \text{no $b$'s have preceding $c$'s}\\
    G_{abc}(i) &:= C_a(i)=C_b(i)=C_c(i) && \text{same number of $a$'s, $b$'s, $c$'s}\\
    Y(i) &:= G_a(i)\land G_b(i)\land G_{abc}(i) && \text{Correct order and number of symbols}\\
\end{align*}

\emph{$hello$ over $\Sigma=\{e,h,l,o\}$}
\begin{align*}
    C_e(i) &:= \cttn{j\leq i}{Q_e(j)} && \text{\# positions with $e$'s}\\
    C_h(i) &:= \cttn{j\leq i}{Q_h(j)} && \text{\# positions with $h$'s}\\
    C_l(i) &:= \cttn{j\leq i}{Q_l(j)} && \text{\# positions with $l$'s}\\
    C_o(i) &:= \cttn{j\leq i}{Q_o(j)} && \text{\# positions with $o$'s}\\
    C_\Sigma(i) &:= \cttn{j\leq i}{1} && \text{\# symbols in string}\\
    HE(i) &:= Q_e(i)\land C_h(i) = 1 && \text{A subsequence $he$ ends at $i$}\\
    C_{he}(i) &:= \cttn{j\leq i}{HE(j)} && \text{\# ends of subsequence $he$}\\
    HEL(i) &:= Q_l(i)\land C_{he}(i) = 1 && \text{A subsequence $hel$ ends at $i$}\\
    C_{hel}(i) &:= \cttn{j\leq i}{HEL(j)} && \text{\# ends of subsequence $hel$}\\
    HELLO(i) &:= Q_o(i)\land C_{hel}(i) = 2 && \text{A subsequence $hello$ ends at $i$}\\
    Y(i) &:= HELLO(i)\land C_\Sigma(i)=5 && \text{Length $5$ and contains subsequence $hello$}
\end{align*}

As a potential point of clarification for the $HELLO(i)$ line, observe that if a string contains two positions that are the end of a subsequence $hel$, then that string must contain the subsequence $hell$.

\subsection{Equivalence with $\KC$}
\label{sec:crasp-equals-kc}

\restate{\cref{lem:crasp-to-logic}}
A $\CRASP$ program recognizes language $L$ iff a $\KC$ formula defines $L$. More precisely, given alphabet $\Sigma$, for any $\KC$ formula $\phi$ there is a $\CRASP$ program $P$ such that $w\in\Sigma^*$ end-satisfies $\phi$ iff $w$ is accepted by $P$, and vice versa.
    
\begin{proof}

    It is straightforward to translate $\KC$ formulas into $\CRASP$ programs. 
    
    In the other direction, we induct on the length of $\CRASP$ program $\mathcal{P}=P_1,\ldots,P_n$. Assume that Boolean operations $P_k(i)$ are simulated by formulas $\hat{P}_k$, and count operations $C_k(i)=\cttn{j\leq i}{V(j)}$ are simulated by terms $\hat{C}_k$. This is straightforward, but there are many cases. The main idea is that whenever we have a conditional or min/max operation, we divide the formula into two cases depending on what the result of the operation should be.
    
    \begin{itemize}
        \item If $P_{k+1}$ is a count-valued vector, it is not used for string acceptance as defined in \ref{def:CRASP}. Thus the formula for this is the formula for the last Boolean-valued vector, by the IH.
        \item If $P_{k+1}(i)=Q_a(i)$, let $\hat{P}_{k+1}=Q_a$.
        \item If $P_{k+1}(i)=\lnot P_\ell(i)$ let $\hat{P}_{k+1}=\lnot \hat{P}_\ell$.
        \item If $P_{k+1}(i)= P_\ell(i)\land P_m(i)$ let $\hat{P}_{k+1}=\hat{P}_\ell \land \hat{P}_m$.
        \item If $P_{k+1}(i)= C_\ell(i)\leq C_m(i)$ we need to divide on cases to handle if $C_\ell(i)$ is a conditional or min/max, as $\KC$ does not have these terms built in. This is straightforward, but there are many cases.
        \begin{compactitem}
            \item If $C_\ell(i)=\cttn{j\leq i}{P_1(i)}$, let $\hat{C}_\ell=\sinceC{\hat{P_1}}$. Then:
                \begin{itemize}
                    \item If $C_m(i)=\cttn{j\leq i}{P_2(i)}$, let $\hat{P}_{k+1}= \hat{C}_\ell \leq \sinceC{\hat{P}_2}$.
                    \item If $C_m(i)=\cif{P_2(i)}{C_3(i)}{C_4(i)}$, let $\hat{P}_{k+1}=(\hat{P_2}\land \hat{C}_\ell\leq \hat{C}_3)\lor (\lnot \hat{P}\land \hat{C}_\ell\leq \hat{C}_4) $.
                    \item If $C_m(i)=\min(C_3(i),C_4(i))$, let $\hat{P}_{k+1}=(\hat{C}_3\leq \hat{C}_4 \land \hat{C}_\ell\leq\hat{C}_3)\lor (\hat{C}_4\leq \hat{C}_3 \land \hat{C}_\ell\leq\hat{C}_4)$.
                    \item If $C_m(i)=\max(C_3(i),C_4(i))$, let $\hat{P}_{k+1}=(\hat{C}_3\geq \hat{C}_4 \land \hat{C}_\ell\leq\hat{C}_3)\lor (\hat{C}_4\geq \hat{C}_3 \land \hat{C}_\ell\leq\hat{C}_4)$.
                    \item If $C_m(i)=c$, for $c\in \N$ let $\hat{P}_{k+1}=\hat{C}_\ell \leq c$.
                \end{itemize}
            \item If $C_\ell(i)=\cif{P_1(i)}{C_1(i)}{C_2(i)}$, then:
            \begin{itemize}
                    \item If $C_m(i)=\cttn{j\leq i}{P_2(i)}$, let $\hat{P}_{k+1}= (\hat{P}_1\land \hat{C}_1\leq\sinceC{\hat{P}_2})\lor (\lnot\hat{P}_1\land \hat{C}_2\leq\sinceC{\hat{P}_2})$.
                    \item If $C_m(i)=\cif{P(i)}{C_3(i)}{C_4(i)}$, let $\hat{P}_{k+1}=(\hat{P}_1\land\hat{P}_2\land \hat{C}_1\leq\hat{C}_3)\lor (\hat{P}_1\land\lnot\hat{P}_2\land \hat{C}_1\leq\hat{C}_4)\lor (\lnot\hat{P}_1\land\hat{P}_2\land \hat{C}_2\leq\hat{C}_3)\lor (\lnot\hat{P}_1\land\lnot\hat{P}_2\land \hat{C}_2\leq\hat{C}_4)$.
                    \item If $C_m(i)=\min(C_3(i),C_4(i))$, let $\hat{P}_{k+1}=(\hat{P}_1\land\hat{C}_3\leq\hat{C}_4\land \hat{C}_1\leq\hat{C}_3)\lor (\hat{P}_1\land\hat{C}_3\geq\hat{C}_4\land \hat{C}_1\leq\hat{C}_4)\lor (\lnot\hat{P}_1\land\hat{C}_3\leq\hat{C}_4\land \hat{C}_2\leq\hat{C}_3)\lor (\lnot\hat{P}_1\land\hat{C}_3\geq\hat{C}_4\land \hat{C}_2\leq\hat{C}_4)$.
                    \item If $C_m(i)=\max(C_3(i),C_4(i))$, let $\hat{P}_{k+1}=(\hat{P}_1\land\hat{C}_3\geq\hat{C}_4\land \hat{C}_1\leq\hat{C}_3)\lor (\hat{P}_1\land\hat{C}_3\leq\hat{C}_4\land \hat{C}_1\leq\hat{C}_4)\lor (\lnot\hat{P}_1\land\hat{C}_3\geq\hat{C}_4\land \hat{C}_2\leq\hat{C}_3)\lor (\lnot\hat{P}_1\land\hat{C}_3\leq\hat{C}_4\land \hat{C}_2\leq\hat{C}_4)$.
                    \item If $C_m(i)=c$, for $c\in \N$ let $\hat{P}_{k+1}=(\hat{P}_1\land \hat{C}_1\leq c)\lor (\lnot\hat{P}_1\land \hat{C}_2\leq c)$.
                \end{itemize}
            \item If $C_\ell(i)=\min(C_1(i),C_2(i))$, then: 
                \begin{itemize}
                    \item If $C_m(i)=\cttn{j\leq i}{P_2(i)}$, let $\hat{P}_{k+1}= (\hat{C}_1\leq\hat{C}_2\land \hat{C}_1\leq \sinceC{\hat{P}_2})\lor (\hat{C}_1\geq\hat{C}_2\land \hat{C}_2\leq \sinceC{\hat{P}_2})$.
                    \item If $C_m(i)=\cif{P(i)}{C_3(i)}{C_4(i)}$, let $\hat{P}_{k+1}=(\hat{C}_1\leq\hat{C}_2\land\hat{P}_2\land \hat{C}_1\leq\hat{C}_3)\lor (\hat{C}_1\leq\hat{C}_2\land\lnot\hat{P}_2\land \hat{C}_1\leq\hat{C}_4)\lor (\hat{C}_1\geq\hat{C}_2\land\hat{P}_2\land \hat{C}_2\leq\hat{C}_3)\lor (\hat{C}_1\geq\hat{C}_2\land\lnot\hat{P}_2\land \hat{C}_2\leq\hat{C}_4)$.
                    \item If $C_m(i)=\min(C_3(i),C_4(i))$, let $\hat{P}_{k+1}=(\hat{C}_1\leq\hat{C}_2\land\hat{C}_3\leq\hat{C}_4\land \hat{C}_1\leq\hat{C}_3)\lor (\hat{C}_1\leq\hat{C}_2\land\hat{C}_3\geq\hat{C}_4\land \hat{C}_1\leq\hat{C}_4)\lor (\hat{C}_1\geq\hat{C}_2\land\hat{C}_3\leq\hat{C}_4\land \hat{C}_2\leq\hat{C}_3)\lor (\hat{C}_1\geq\hat{C}_2\land\hat{C}_3\geq\hat{C}_4\land \hat{C}_2\leq\hat{C}_4)$.
                    \item If $C_m(i)=\max(C_3(i),C_4(i))$, let $\hat{P}_{k+1}=(\hat{C}_1\leq\hat{C}_2\land\hat{C}_3\geq\hat{C}_4\land \hat{C}_1\leq\hat{C}_3)\lor (\hat{C}_1\leq\hat{C}_2\land\hat{C}_3\leq\hat{C}_4\land \hat{C}_1\leq\hat{C}_4)\lor (\hat{C}_1\geq\hat{C}_2\land\hat{C}_3\geq\hat{C}_4\land \hat{C}_2\leq\hat{C}_3)\lor (\hat{C}_1\geq\hat{C}_2\land\hat{C}_3\leq\hat{C}_4\land \hat{C}_2\leq\hat{C}_4)$.
                    \item If $C_m(i)=c$, for $c\in \N$ let $\hat{P}_{k+1}=(\hat{C}_1\leq\hat{C}_2\land \hat{C}_1\leq c)\lor (\hat{C}_1\geq\hat{C}_2\land \hat{C}_2\leq c)$.
                \end{itemize}
            \item If $C_\ell(i)=\max(C_1(i),C_2(i))$, this identical to the previous case but just switch $\hat{C}_1\leq\hat{C}_2$ with $\hat{C}_1\geq\hat{C}_2$.
            \item If $C_\ell(i)=c$ for $c\in \N$, this should be straightforward given the above cases written out.
        \end{compactitem}
    \end{itemize}
\end{proof}

\restate{\cref{lem:foc-lower}}
$\KC$ contains $\FOCplus$, and $\mathbf{K}_t[\sinceCO;\mathsf{MOD}]$ contains $\FOCplusMOD$. In general, for any set $\mathcal{P}$ of unary predicates, the extension $\mathsf{FOC}$[+;$\mathcal{P}$] is contained in $\textbf{K}_{\text{t}}$[\textbf{\#};$\mathcal{P}$], when defined as expected.

\begin{proof}
    All $\FOCplus$ sentences can be rewritten in the following normal form \citep[Theorem~1]{chiang-2023-tighter}: 
    \[\exists x_1\ldots \exists x_n.\left(\bigwedge_i \exists^{=x_i}p. \psi_i(p)\land\chi(x_1,\ldots,x_n)\right)\]

    where all $\psi_i$ and $\chi$ are quantifier-free, and $\chi$ is a set of linear constraints on $x_1\ldots x_n$. 

    With respect to end-satisfiablity, this is simply equivalent to the $\KC$ formula.

    \[\chi\left(\sinceC{\psi_1},\ldots, \sinceC{\psi_n}\right)\]

    The same is true for $\FOCplusMOD$, if we extend $\KC$ with modular predicates to $\textbf{K}_{\text{t}}$[\textbf{\#};$\mathsf{MOD}$] where $(w,i)\vDash \mathsf{MOD}_m^k \iff i\equiv k\mod m$. 

    In general, from this normal form we can see that for any set $\mathcal{P}$ of unary predicates, the extension $\mathsf{FOC}$[+;$\mathcal{P}$] is contained in $\textbf{K}_{\text{t}}$[\textbf{\#};$\mathcal{P}$], when defined appropriately. Furthermore, all of these will be $\KC$ formulas of modal depth $1$.
\end{proof}

\section{Proofs: From $\KC$ to Transformers}

In all proofs, we omit the scaling factor applied by LayerNorm, because it can be confusing and also does not affect the end result. However, whenever relevant, we mention how to account for this scaling factor. 

Here, we recall the definition of LayerNorm

\begin{definition}
    LayerNorm is a function $f:\R^d\to \R^d$ such that 

    \begin{equation*}
        f(x) = \frac{x-\mu}{\sigma} \quad\text{where}\quad \mu = \cfrac{1}{d}\sum_{i=1}^d x_i \quad\quad \sigma = \sqrt{\cfrac{1}{d}\sum_{i=1}^d \left(x_i-\mu\right)^2}
    \end{equation*}

    Observe that if $\mu=0$, LayerNorm only applies a scaling factor to all values in a position. Observe furthermore if the absolute value of all values in a position are equal, LayerNorm scales all of them to be $\pm 1$, which is important in \cref{sec:comparison}.
\end{definition}

An essential part of our construction is access to a Boolean vector that is True at the $\BOS$ position and False everywhere else. We call this a \emph{Start-Separating Vector}, adapting terminology from \cite{merrill-sabharwal-2024-cot}. Using the word embedding for $\BOS$, we can construct a dimension that holds such a vector (ensuring $WE(w_i)$ is true at dimensions $2k_s-1,2k_s$ only when $w_i=\BOS$). 

    $$ 
    \begin{blockarray}{ccccc}
        & 1 & 2 & & n'\\
        \begin{block}{c[cccc]}
                &\vdots & \vdots & & \vdots  \\
             2k_s-1 & -1 & +1  & \cdots & +1 \\[6pt]
             2k_s & +1 & -1  & \cdots & -1 \\[6pt]
              & \vdots & \vdots & &\vdots\\
        \end{block}
    \end{blockarray}
    $$

\subsection{$\BOS$ Handling Lemma}
\label{sec:BOS-handling}

\restate{\cref{lem:BOS-handling}}
    Using the word embedding and a single feed-forward layer, we can set the $\BOS$ position to False, without disturbing the Boolean value at any other position.

\quad
\begin{proof}
    Construct a feed-forward layer which computes the $\min$ of every value in an odd dimension with the value in dimension $2k_s-1$ of the Start-Separating Vector and the $\max$ of every value in an even dimension with the value in $2k_s$, as described in \cref{lem:ffn}. Since we've maintained that all values are $\pm 1$ in our Boolean representation, this sets the $\BOS$ position to be False, while the others are unmodified.
\end{proof}

\subsection{Counting Lemma}
\label{sec:counting_proof}

\restate{\cref{lem:scale-free-count}}
    Let $A_{2k-1:2k},*$ store a Boolean vector as defined above. For any $i$, let $C_{k,i}$ be the number of positions $j\leq i$ such that $A_{2k-1:2k,j}$ is True. Then there is a transformer block that computes, at each position $i$, and in two other dimensions $2k'-1,2k'$, 
    the values $-\frac{C_{k,i}}{i+1}$ and $\frac{C_{k,i}}{i+1}$.

    \begin{proof}

    First, recall the definition of a masked self attention layer $\mathit{SA} \colon \mathbb{R}^{d\times n}\rightarrow \mathbb{R}^{d\times n}$:
    
    \begin{align*}
        \mathit{SA}(A) &= \begin{bmatrix} c_o &\cdots &c_n\end{bmatrix} \text{ where}\\
        W^{(Q)} \colon \R^d&\to\R^{d_k}\\
        W^{(K)} \colon \R^d &\to\R^{d_k}\\
        W^{(V)} \colon \R^d &\to\R^d\\
        s_{ij} &= \frac{W^{(Q)}A_{*,i}\cdot W^{(K)}A_{*,j}}{\sqrt{d}} \\ %
        c_i &= \frac{\sum_{j=0}^i \text{exp}(s_{ij})W^{(V)}A_{*,j}}{\sum_{j=0}^i \text{exp}(s_{ij})}
    \end{align*}

    To simulate a count term $\sinceC{\phi}=C(i)$ of $\KC$, we need to construct a transformer block such that if the Boolean values $\phi(i)$ are stored in some dimension $2k-1,k$, we can compute $\frac{C(i)}{i+1}$ in some other dimensions $2k'-1,k'$.
    
    To achieve this, we only need \emph{uniform attention} -- that is, at each position $i$, we set $W^{(Q)}=W^{(K)}=\mathbf{0}$, which makes $s_{ij}=0$ for all $i,j$. This spreads attention weight evenly across all positions $j\leq i$. 
    
    Furthermore, by setting $W^{(V)}$ as follows, we can add the value from position $i$ in dimensions $2k-1,2k$ to position $i$ in dimensions $2k'-1,2k'$: 
    
    \begin{equation*}
    W^{(V)}=
    \begin{blockarray}{cccccccc}
                & & 2k-1 & 2k & & 2k'-1 & 2k' & \\
        \begin{block}{c[ccccccc]}
                & & \vdots & \vdots & & \vdots & \vdots \\
             2k-1 & \cdots & 0 & 0 & \cdots & 0 & 0 & \cdots \\
             2k & \cdots & 0 & 0 & \cdots & 0 & 0 & \cdots \\
                &  & \vdots & \vdots & & \vdots & \vdots \\
             2k'-1 & \cdots & 1 & 1 & \cdots & 0 & 0 & \cdots \\
             2k' & \cdots & 1 & 1 & \cdots & 0 & 0 & \cdots \\
              & & \vdots & \vdots & & \vdots & \vdots \\
        \end{block}
    \end{blockarray}
    \end{equation*}
    
    As such, we get the attention layer to compute the average of all unmasked positions $j\leq i$. 
    Recalling the definition once more, the general expression reads:

    \[c_{i,k} = \frac{\sum_{j=0}^i \text{exp}(s_{ij})[W^{(V)}A_{*,j}]_k}{\sum_{j=0}^i \text{exp}(s_{ij})}\]
    
    Then, with the use of uniform attention and our $W^{(V)}$, we compute in position $i$ of dimension $k$ the value $c_{i,k}$, which is the average of all values up to position $i$ in dimension $k$. Written out:

    $$c_{i,k} = \frac{\sum_{j=0}^i \text{exp}(s_{ij})[W^{(V)}A_{*,j}]_k}{\sum_{j=0}^i \text{exp}(s_{ij})}=\frac{\sum_{j=0}^i [W^{(V)}A_{*,j}]_k}{\sum_{j=0}^i 1}=\frac{\sum_{j=0}^i A_{k,j}}{\sum_{j=0}^i 1}=\frac{\sum_{j=0}^i A_{k,j}}{i+1}$$
    
    Before moving on, note that we represent Booleans as $-1,+1$ instead of $0,1$, and we'll also write the sum of positions $j\leq i$ that hold a True as $C_{i,k}$. 
    Here we write out the resulting activations after the described self-attention layer is applied. We write $B_i\in\{0,1\}$ be the Boolean value at position $i$.

    $$ 
    \begin{blockarray}{ccccc}
        & 1 & 2 & & n' \\
        \begin{block}{c[cccc]}
                & \vdots & \vdots & & \vdots  \\
               2k-1 & +1& -2B_1+1 & \cdots & -2B_n+1\\[6pt]
               2k & -1 & +2B_1-1 & \cdots & +2B_n-1  \\
                & \vdots & \vdots & & \vdots  \\[6pt]
             2k'-1 & 0 & -2\frac{C_{1,d'}}{2}+1  & \cdots & -2\frac{C_{n,d'}}{n+1}+1   \\[6pt]
             2k' & 0 & +2\frac{C_{1,d'}}{2}-1  & \cdots & +2\frac{C_{n,d'}}{n+1}-1  \\[6pt]
              & \vdots & \vdots & & \vdots \\
        \end{block}
    \end{blockarray}
    $$

    Note, however, that instead of the desired value $\frac{C_{i,k}}{i+1}$, we have actually computed $2\frac{C_{i,k}}{i+1}-1$.
        We use a feed-forward layer to undo this transformation by subtracting (or adding) 1 and then dividing by $2$ in dimension $d$ (or $d+1$) to get the result:
    
    $$ 
    \begin{blockarray}{ccccc}
        & 1 & 2 & & n'\\
        \begin{block}{c[cccc]}
                &\vdots & \vdots & & \vdots \\
             2k'-1 & 0 & -\frac{C_{1,d'}}{2}  & \cdots & -\frac{C_{n,d'}}{n+1}   \\[6pt]
             2k' & 0 & +\frac{C_{1,d'}}{2}  & \cdots & +\frac{C_{n,d'}}{n+1} \\[6pt]
              & \vdots & \vdots & & \vdots\\
        \end{block}
    \end{blockarray}
    $$

    As a final note, which is relevant for the later subsection on arithmetic comparison, we will make every self-attention layer compute a counting operation over the start-separating vector (\cref{sec:BOS-handling}) in order to compute the value $\frac{1}{i+1}$ at every position $i$ in some dimension. 
    \end{proof}

\subsection{Feed-Forward Lemma}
\label{sec:ffn_proofs}

\restate{\cref{lem:ffn}}
        The following position-wise operations can be simulated by a single transformer block, using existing dimensions as input and a fresh dimension as output: 
        addition ($+$), subtraction ($-$),
        comparison ($\le$), 
        and Boolean operations ($\land$, $\lnot$). Arbitrary Boolean expressions in DNF can be simulated using two blocks. 
\begin{proof}
A two-layer feed-forward network with ReLU activations on the first layer and none on the second layer can compute any continuous piecewise linear function with a finite number of pieces, or CPWL \citep{arora+:2018}. All of these operations are CPWL, but we give explicit constructions for concreteness, and the case of comparisons requires special care.

We consider each type of operation in turn. All constructions only use the feed-forward layer. We can force the self-attention layer to perform no changes to the residual stream by setting all weights to $0$; the residual connection then adds the original values back in.

\emph{Addition and subtraction:}
These operations are straightforward: put $1$'s in $W_1$ such that we add the appropriate values to the fresh dimension, and negate using $W_2$ if needed. 

\emph{Min/Max:}
    Recall that $\ReLU(x)=\max(0,x)$. Then  $\min(x,y) = x-\ReLU(x-y)$. %
    \begin{compactitem}
        \item If $x<y$, then $\ReLU(x-y)=0$, so $x-\ReLU(x-y)=x = \min(x,y)$.
        \item If $x\geq y$, then $\ReLU(x-y)=x-y$, so $x-\ReLU(x-y)=x-(x-y)=y = \min(x,y)$.
    \end{compactitem}
    Similarly, $\max(x,y) = x+\ReLU(y-x)$.
    
    Therefore there exist FFNs to compute the min or max of two real numbers:
  \begin{center}
    \begin{tabular}{c@{\hspace*{4em}}c}
      \begin{tikzpicture}[x=1.5cm,y=1.5cm,baseline=1cm]
        \node (x1) at (0,0) [input,label=below:{$x$}];
        \node (x2) at (1,0) [input,label=below:{$y$}];
        \node (h1) at (-0.5,1) [relu] edge node[near start] {$1$} (x1);
        \node (h2) at (0.5,1) [relu] edge node {$-1$} (x1);
        \node (h3) at (1.5,1) [relu] edge node[near start] {$1$} (x1) edge node[auto=left,near start] {$-1$} (x2);
        \node (y) at (0.5,2) [output,label=above:{$\min(x,y)$}] edge node {$1$} (h1) edge node[auto=left] {$-1$} (h2) edge node[auto=left] {$-1$} (h3);
      \end{tikzpicture} &
      \begin{tikzpicture}[x=1.5cm,y=1.5cm,baseline=1cm]
        \node (x1) at (0,0) [input,label=below:{$x$}];
        \node (x2) at (1,0) [input,label=below:{$y$}];
        \node (h1) at (-0.5,1) [relu] edge node[near start] {$1$} (x1);
        \node (h2) at (0.5,1) [relu] edge node {$-1$} (x1);
        \node (h3) at (1.5,1) [relu] edge node[near start] {$-1$} (x1) edge node[auto=left,near start] {$1$} (x2);
        \node (y) at (0.5,2) [output,label=above:{$\max(x,y)$}] edge node {$1$} (h1) edge node[auto=left] {$-1$} (h2) edge node[auto=left] {$1$} (h3);
      \end{tikzpicture}
    \end{tabular}
  \end{center}    

To compute the min of two count terms, we compute the min of their positive components and the max of their negative components. Similarly for the max of two count terms.

\emph{Comparison:}\label{sec:comparison}
        This requires access to $\pm\frac{1}{i+1}$ in some dimensions $2k_0-1,2k_0$. This is easily achieved by requiring every self-attention layer to perform a counting operation over the start-separating vector (\cref{sec:BOS-handling}).
        
        First we explain how to simulate a comparison of two count terms $C_1(i)\leq C_2(i)$. Then, we describe how to extend this to %
        compare linear combinations of count terms.
        
        Suppose that we want to compare $C_1$ and $C_2$ in dimensions $2k_1-1,2k_1$ and $2k_2-1,2k_2$, and put the result in dimension $2k_3-1,2k_3$. Initially, the residual stream looks like this:
        \begin{equation*}
        \begin{blockarray}{cccc}
            & & i & \\
            \begin{block}{c[ccc]}
                    &  & \vdots &  \\
                   2k_0-1 & \cdots & -\frac{1}{i+1} & \cdots \\[6pt]
                    2k_0 & \cdots & +\frac{1}{i+1} & \cdots \\
                    & & \vdots & \\[6pt]
                   2k_1-1 & \cdots & -\frac{C_1(i)}{i+1} & \cdots \\[6pt]
                    2k_1 & \cdots & +\frac{C_1(i)}{i+1} & \cdots \\
                    & & \vdots & \\
                   2k_2-1 & \cdots & -\frac{C_2(i)}{i+1} & \cdots \\[6pt]
                    2k_2 & \cdots & +\frac{C_2(i)}{i+1} & \cdots \\
                  & & \vdots & \\
                   2k_3-1 & \cdots & 0 & \cdots \\[6pt]
                    2k_3 & \cdots & 0 & \cdots \\
                    & & \vdots & \\
            \end{block}
        \end{blockarray}
        \end{equation*}

        \newcommand{\clipfn}{\operatorname{gtz}}

        We construct a feed-forward layer that computes the function: 
        \[\clipfn(X(i)) = \min\left(\frac{0.5}{i+1},\frac{X(i)}{i+1}-\frac{0.5}{i+1}\right)-\min\left(0,\frac{X(i)}{i+1}\right).         
        \qquad
        \begin{tikzpicture}[baseline=0.8cm]
        \begin{axis}[width=4cm,xlabel={$X(i)$},xtick={-1,0,1,2},ylabel={$\clipfn(X(i))$},ytick={-0.5,0,0.5},yticklabels={$-\frac{0.5}{i+1}$,$0$,$\frac{0.5}{i+1}$}]
        \addplot[mark=none,samples at={-1,0,1,2}] { min(0.5, x-0.5)-min(0,x) };
        \end{axis}
        \end{tikzpicture}\]

        Observe that $\clipfn(C_2(i)-C_1(i)+0.5)$ equals $\frac{0.5}{i+1}$ if $C_1(i)\leq C_2(i)$, and $-\frac{0.5}{i+1}$ otherwise.
        This is because the counts $C_1(i),C_2(i)$ must be integers, so if $C_1(i)\leq C_2(i)$, then $C_2(i)-C_1(i)+0.5\geq 0.5$, and the expression will evaluate to $\frac{0.5}{i+1}$. Otherwise, $C_2(i)-C_1(i)+0.5<-0.5$, and the expression will evaluate to $-\frac{0.5}{i+1}$. 

        It is straightforward, then, to use the construction for $\min$/$\max$ from above to produce a feed-forward layer that computes $\clipfn\left(C_2(i)-C_1(i)\right)$. Essentially, we use $W_1$ to compute the values (using the pre-existing values from the residual stream)

        \[\frac{0.5}{i+1}, \frac{C_2(i)-C_1(i)+0.5}{i+1}, -\frac{C_2(i)-C_1(i)+0.5}{i+1}\]
        
        Then we use $W_2$ to compute

        \[\frac{0.5}{i+1}+\ReLU\left(\frac{0.5}{i+1}-\frac{C_2(i)-C_1(i)+0.5}{i+1}\right)-\ReLU\left(\frac{0.5}{i+1}-\frac{C_2(i)-C_1(i)-0.5}{i+1}\right)\]
        which equals $\clipfn(C_2(i)-C_1(i))$ as desired.
        
        Similarly, it is straightforward to construct a feed-forward layer to compare linear combinations of count terms. That is, for disjoint sets of indices $K_1$ and $K_2$, to compute
        \[\clipfn\left(\sum_{k\in K_2} c_{k} \cdot C_k(i) -\sum_{k\in K_1} c_{k} \cdot C_k(i)\right).\]

    So we can construct a feed-forward layer $f:\R^d\to\R^d$ that computes in each dimension $i$ the following

    \[f\left(\begin{bmatrix}
        v_1\\
        v_2\\
        \vdots\\
        0\\
        0\\
        \vdots\\
        v_{d-1}\\
        v_{d}\\
    \end{bmatrix}\right) =\begin{bmatrix}
        \clipfn(v_1)\\
        \clipfn(v_2)\\
        \vdots\\
        \clipfn\left(\sum_{k\in K_2} c_{k} \cdot C_k(i) -\sum_{k\in K_1} c_{k} \cdot C_k(i)\right)\\
        \clipfn\left(\sum_{k\in K_2} c_{k} \cdot C_k(i) -\sum_{k\in K_1} c_{k} \cdot C_k(i)\right)\\
        \vdots\\
        \clipfn(v_{d-1})\\
        \clipfn(v_{d})\\
    \end{bmatrix}. \]
    
    This truncates all positive values in the residual stream at this point to be $\frac{0.5}{i+1}$ at position~$i$, and all nonpositive values to be $-\frac{0.5}{i+1}$. As a result, the next application of LayerNorm (with appropriate parameter settings) scales every single value to $\pm 1$. In particular, all previously-computed Boolean values are preserved, and the newly-computed dimensions $2k_3-1,2k$ hold the correct Boolean value based on the desired comparison
    
    As a side effect, all previously-computed counts also get changed to $\pm 1$, but we do not need these counts any longer due to how we organized the construction in \cref{thm:KCtoTransformer}.

\emph{Boolean operations:}\label{sec:boolean}
  The Boolean operations $\land$ and $\lnot$ can be computed by FFNNs with $\ReLU$ activations. Conjunction ($\land$) is equivalent to min/max:
  \begin{align*}
  \begin{bmatrix} \vdots \\ -2B_1+1 \\ \phantom{-}2B_1-1 \\ \vdots \end{bmatrix} \land \begin{bmatrix} \vdots \\ -2B_2+1 \\ \phantom{-}2B_2-1 \\ \vdots \end{bmatrix} &= \begin{bmatrix} \vdots \\ \max(-2B_1+1,-2B_2+1) \\ \min(\phantom{-}2B_1-1,\phantom{-}2B_2-1) \\ \vdots \end{bmatrix}.
  \end{align*}

  Logical negation ($\lnot$) is equivalent to arithmetic negation, or swapping the positive and negative components:
  \begin{center}
      \begin{tikzpicture}[x=2cm,y=1.5cm,baseline=1cm]
        \node (x1) at (0,0) [input,label=below:{$2B-1$}];
        \node (x2) at (1,0) [input,label=below:{$-2B+1$}];
        \node (h1) at (0,1) [relu] edge node {$1$} (x1);
        \node (h2) at (1,1) [relu] edge node {$1$} (x2);
        \node (y1) at (0,2) [output] edge node {$1$} (h1) edge node[near start] {$-1$} (h2);
        \node (y2) at (1,2) [output] edge node[auto=left,near start] {$-1$} (h1) edge node[auto=left] {$1$} (h2);
      \end{tikzpicture}
  \end{center}

    For an arbitrary Boolean formula, convert it to \emph{canonical} disjunctive normal form, which is a disjunction $\phi_1 \lor \cdots \lor \phi_n$ of clauses, at most one of which can be true for any value of the inputs. 
    Each clause is of the form $\phi_m = \bigwedge_{k\in K_m} B_k$, where each $B_k$ is an input or a negated input and $K_m$ is a set of indices for each clause. A slightly different construction can be used to compute $\land$ over inputs in $K_m$. Observe that:

    \begin{align*}
    \bigwedge_{k\in K_m} B_k &= \ReLU\left(\left(\sum_{k\in K_m} B_k\right) -(|K_m|-1)\right) \\ 
    \end{align*}
    
    And this can be computed for each clause using the first layer and $\ReLU$ of a feed-forward layer. Recall again that if a constant is fixed, we can retrieve it by multiplying the constant $\pm 1$ from dimensions $2k_0-1,2k_0$ as described in \cref{sec:counting_proof}. Then, because at most one clause can be true, the sum of all clauses will either be $1$ or $0$. Then, we convert back to the $\pm1$ representation of truth values. 

    \[\bigvee_{m=1}^n \left(\bigwedge_{k\in K_m} B_k \right)=2\cdot\left(\sum_{m=1}^n\ReLU\left(\left(\sum_{k\in K_m} B_k\right) -(|K_m|-1)\right)\right)-1.\]

    This can all be done in a single feed-forward layer.

\end{proof}

\subsection{Parallel Composition of Transformers}
\label{sec:transformer-compose-proof}
\begin{lemma}
    Here we detail the claim in \cref{thm:KCtoTransformer} that we can compose many transformers, into a larger transformer which simulates all the formulas the smaller transformers do, in parallel.

    Two transformers $T_1$ and $T_2$ simulating formulas $\psi_1$ and $\psi_2$ (using the construction described in \cref{thm:KCtoTransformer}) can be composed into a single transformer $T_3$ that simulates both $\psi_1$ and $\psi_2$.
\end{lemma}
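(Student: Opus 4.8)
The plan is to build $T_3$ as the \emph{direct sum} of $T_1$ and $T_2$: if $T_1$ has $d_1$ dimensions and $T_2$ has $d_2$, then $T_3$ has $d_1+d_2$ dimensions, with the first $d_1$ reserved for $T_1$'s computation and the last $d_2$ for $T_2$'s. The word embedding of $T_3$ is the stacking of the two word embeddings, and every weight matrix of $T_3$ (the query, key, and value matrices of each self-attention layer, and the two matrices of each feed-forward layer) is taken block-diagonal with respect to the partition of coordinates into the first $d_1$ and the last $d_2$. First, though, I would pad the shallower transformer with ``identity'' blocks (all self-attention and feed-forward weights set to $0$) so that $T_1$ and $T_2$ have the same number of blocks $b$; such a block computes $\mathit{SA}(A)=0$ and $\mathit{FFN}=0$, hence leaves the residual stream unchanged except for a LayerNorm scaling factor, which we ignore throughout and which preserves the sign pattern encoding every Boolean value. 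The construction then proceeds block-by-block, applying block $t$ of $T_1$ to the first $d_1$ coordinates and block $t$ of $T_2$ to the last $d_2$ coordinates simultaneously.

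The key point that makes the two computations non-interfering is that, in the construction of \cref{thm:KCtoTransformer}, every self-attention layer is \emph{uniform}: it sets $W^{(Q)}=W^{(K)}=\mathbf 0$, so all attention scores are $0$ (\cref{lem:scale-free-count}). Stacking two such layers block-diagonally still yields $W^{(Q)}=W^{(K)}=\mathbf 0$, so the combined scores remain $0$ and attention stays uniform; the block-diagonal value matrix then averages $T_1$'s coordinates and $T_2$'s coordinates independently, exactly reproducing each transformer's counting step. Likewise, the feed-forward layers (\cref{lem:ffn}) act coordinate-wise through $\ReLU$, so a block-diagonal choice of their weight matrices computes $T_1$'s and $T_2$'s position-wise operations in parallel with no cross terms. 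Thus the self-attention and feed-forward sublayers introduce no coupling between the two halves.

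The one place where the halves genuinely interact is LayerNorm, which normalizes each column over \emph{all} $d_1+d_2$ coordinates at once; this is where I expect the main difficulty. The saving invariant, maintained everywhere in the base construction, is that every stored quantity is represented by a zero-mean pair $(-v,+v)$, so the mean of any column is $0$ regardless of how the two halves are interleaved, and LayerNorm's mean-subtraction therefore remains a no-op on the combined stream. What LayerNorm still does is multiply each column by a single scalar $1/\sigma$; since this factor is common to all coordinates, it changes neither any sign (so all Boolean values survive) nor the relative magnitudes within a half, and it is precisely the factor the base construction already agrees to omit. The one subtlety is the comparison step of \cref{sec:comparison}, which relies on LayerNorm rescaling values of equal magnitude $\pm\tfrac{0.5}{i+1}$ up to $\pm 1$; because the clipping function $\operatorname{gtz}$ is dimension-agnostic, mapping every positive value to $\tfrac{0.5}{i+1}$ and every nonpositive value to $-\tfrac{0.5}{i+1}$, applying it across the whole combined column (rather than block-diagonally) at any block where either transformer performs a comparison leaves all Boolean values intact while giving the column a single common magnitude, so the subsequent LayerNorm rescales correctly for both halves at once.

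Finally, I would verify the simulation condition directly: for each subformula $\psi_k$ of $\psi_1$, the dimension guaranteed by $T_1$ still holds the correct $\pm 1$ encoding in $T_3$ (now among the first $d_1$ coordinates), and symmetrically for $\psi_2$ in the last $d_2$ coordinates, after matching the block count and applying the $\BOS$ handling of \cref{sec:BOS-handling} on each half. This establishes that $T_3$ simulates both $\psi_1$ and $\psi_2$, and the induction in \cref{thm:KCtoTransformer} then composes arbitrarily many subformula transformers by iterating this pairwise construction.
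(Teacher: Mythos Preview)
Your proposal is correct and follows essentially the same approach as the paper: pad the shallower transformer with identity blocks, concatenate the word embeddings, and take all weight matrices block-diagonal, relying on the fact that every attention layer in the base construction is uniform so the two halves never interact through attention. Your treatment is actually more careful than the paper's, which simply asserts correctness ``given we follow the procedure in \cref{thm:KCtoTransformer} organizing the simulation by modal depth'' and does not spell out the LayerNorm analysis you give (the zero-mean invariant and the global application of $\operatorname{gtz}$ at comparison blocks).
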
 

\begin{proof}
    We will just sketch out the construction. Let $T_1$ with $b_1$ blocks in $d_1$ dimensions and word embedding $\text{Emb}_1$ simulate $\phi_1$ and $T_2$ with $b_2$ blocks in $d_2$ dimensions and word embedding $\text{Emb}_2$ simulate $\phi_2$ in the manner described in \cref{thm:KCtoTransformer}. We can construct $T_1\oplus T_2$  with $\max(b_1,b_2)$ blocks in $d_1+d_2$ dimensions which simulates both $\phi_1$ and $\phi_2$, such that
    
    \[(T_1\oplus T_2)(w)=\begin{bmatrix}
        T_1(w)\\
        T_2(w)
    \end{bmatrix}.\]
    
    First, add layers that compute the identity function to the shallower transformer so that both have depth $\max(k_1,k_2)$. 
    
    Next, concatenate their word embedding vectors

        \[(\text{Emb}_1\oplus\text{Emb}_2)(a) = \begin{bmatrix}
            \text{Emb}_1(a)\\
            \text{Emb}_2(a)
        \end{bmatrix}.\]

    Then, at each level, we can compose the feed-forward networks $\text{FF}_1$ and $\text{FF}_2$ in parallel by creating a new $\text{FF}$ with
    \begin{align*}
        W^{(1)}&=\begin{bmatrix}
            W_1^{(1)} & \mathbf{0}\\
            \mathbf{0} & W_2^{(1)}
        \end{bmatrix}  &\quad b^{(1)}&=\begin{bmatrix}
            b_1^{(1)}\\
            b_2^{(1)}
        \end{bmatrix}\\
        W^{(2)}&=\begin{bmatrix}
            W_1^{(2)} & \mathbf{0}\\
            \mathbf{0} & W_2^{(2)}
        \end{bmatrix} 
        & \quad b^{(2)} &=\begin{bmatrix}
            b_1^{(2)}\\
            b_2^{(2)}
        \end{bmatrix}.
    \end{align*}
    
    Furthermore, we can compose the self-attentions in exactly the same manner, as only uniform attention is used in the construction. 

    \begin{align*}
        W^{(Q)} &= \begin{bmatrix}
            W_1^{(Q)} & W_2^{(Q)} 
        \end{bmatrix}\\
        W^{(K)} &= \begin{bmatrix}
            W_1^{(K)} & W_2^{(K)} 
        \end{bmatrix}\\
        W^{(Q)} &= \begin{bmatrix}
            W_1^{(V)} & \mathbf{0}\\
            \mathbf{0} & W_2^{(V)}
        \end{bmatrix}
    \end{align*}

    It is straightforward to verify the correctness of this construction, given we follow the procedure in \cref{thm:KCtoTransformer} organizing the simulation by modal depth.
\end{proof}

\section{Fixed Precision Masked Transformers to $\KCmod$}
\label{sec:transformers_to_kc}
\begin{definition}
  A \emph{fixed-precision number} with $r$ integer bits and $s$ fractional bits is a number in $\mathbb{F}_{r,s} = \{i/2^s \mid -2^{r+s} \le i < 2^{r+s}\}$.
  For any value $a \in \mathbb{F}_{r,s}$, we write $\langle a \rangle_b$ for the $b$-th bit of the two's complement representation of $a$. That is,
  \begin{align*}
    \langle a\rangle_b &= \lfloor a \cdot 2^{-b} \rfloor - \lfloor a \cdot 2^{-b-1}\rfloor \cdot 2.
  \end{align*}

    This is a two's complement representation.
\end{definition}
It helps to access each individual bit of $x$.

\begin{proposition}
    We write $x^b$ for the $b$-th bit of $x$, whenever $x$ is a fixed-precision number. Then, observe that we can write a formula $F_m(x)\iff x = F_m$ iff we can write formulas $F^b(x)\iff x^b=1$. 
\end{proposition}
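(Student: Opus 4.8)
The plan is to prove both directions of the stated equivalence by exhibiting explicit finite Boolean combinations, using crucially that fixed precision makes both the number of bits and the number of representable values constants that do not depend on the input string. Since $\KC$ is closed under $\land$, $\lnot$ (and hence $\lor$), any such finite combination of admissible formulas is again an admissible formula, so the whole argument stays inside the logic.

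For the direction that assumes each bit-test $F^b$ is expressible: under two's complement, a value $x\in\mathbb{F}_{r,s}$ is uniquely determined by its bit vector $(x^b)_b$ over the finite range of bit indices $b$. Hence $x$ equals a fixed target value $F_m$ exactly when every bit of $x$ agrees with the corresponding (known) bit $\langle F_m\rangle_b$ of the constant. I would therefore set
\[
F_m(x) \;:=\; \bigwedge_{b\,:\,\langle F_m\rangle_b=1} F^b(x) \;\land\; \bigwedge_{b\,:\,\langle F_m\rangle_b=0} \lnot F^b(x),
\]
a finite conjunction, which holds iff $x=F_m$.

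For the converse, assuming each value-test $F_m$ is expressible: bit $b$ of $x$ is $1$ exactly when $x$ takes one of the finitely many values of $\mathbb{F}_{r,s}$ whose $b$-th bit is $1$. I would set
\[
F^b(x) \;:=\; \bigvee_{m\,:\,\langle F_m\rangle_b=1} F_m(x),
\]
a finite disjunction, which holds iff $x^b=1$.

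The only thing to check — rather than a genuine obstacle — is that two's complement is a faithful, bijective encoding of $\mathbb{F}_{r,s}$ on the fixed bit range, so that ``equal values'' and ``equal bit vectors'' really coincide; this is immediate from the definition of $\langle a\rangle_b$ given above. Everything else reduces to the observation that fixed precision bounds both the number of bits and the number of values by constants, keeping every conjunction and disjunction finite and hence expressible, which is exactly why the proposition is phrased as a direct observation.
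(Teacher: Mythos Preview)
Your proposal is correct and follows the same idea as the paper's proof, which merely illustrates the bit-tests-to-value-test direction by a single example (writing $F_m(x)$ for $1001$ as $F^0(x)\land\lnot F^1(x)\land\lnot F^2(x)\land F^3(x)$). Your argument is in fact more complete: you spell out both directions with explicit finite Boolean combinations and make the finiteness-from-fixed-precision reasoning explicit, whereas the paper leaves the converse direction and the general form implicit.
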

\begin{proof}
    This should be clear by an example: $1001$ in $\F_{5,0}$. We write $F_m(x)\iff F^0(x)\land\lnot F^1(x)\land \lnot F^2(x)\land F^3(x)$.
\end{proof}

Let us first define what it means for a formula of $\KCmod$ to simulate a Fixed Precision transformer. 
\begin{definition}
    Let $T:\Sigma^n\to \F_{r,s}^{d\times n'}$ be a fixed-precision masked \SAT{} defined exactly the same except we use $\F_{r,s}$ instead of $\R$. We say $T$ can be simulated in $\KCmod$ if for every $F_m\in \F_{r,s}$ and every dimension of $k$ of $T$ we can write a formula $\Phi_m^k$ such that
    
    \[[T(\BOS\cdot w)]_{k,i+1} = F_m \iff w,i\vDash \Phi_m^k\]

    Similarly, defining predicates $\beta_m^k(i)$ for the $\BOS$ position. Essentially this means that we can write a formula that tells us what value the transformer must output, given any input.
\end{definition}

\restate{\cref{thm:transformers-to-KC}}{
    $\KC$ can simulate fixed-precision masked \SATs{} without positional encodings, and $\KCmod$ can simulate fixed-precision masked \SATs{} with sinusoidal positional encodings.
}

\begin{proposition}\label{prop:fpvalue}
   Assume we have Boolean functions $F^k_m(i)$ which return true iff the value at position $i$ in dimension $k$ is $F_m\in \F_{r,s}$ Then any function of the form $f_1(x_1,\ldots,x_d)=f_2(x_1,\ldots,x_d)$, where the $x_k$ is the value at dimension $k$ in position $i$, can be written as a Boolean combination of $F_m^k(i)$.
\end{proposition}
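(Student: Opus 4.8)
The plan is to reduce the claim to a finite truth-table argument, exploiting that $\F_{r,s}$ is a finite set. The key structural fact I would establish first is that, at any fixed position $i$, each dimension $k$ holds exactly one value of $\F_{r,s}$, so the family $\{F^k_m(i)\}_{F_m\in\F_{r,s}}$ is a complete and mutually exclusive case split: for each $k$, exactly one of these predicates is true. This is what licenses reasoning by exhaustive enumeration.

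Given this, I would observe that the joint value $(x_1,\ldots,x_d)$ ranges over the finite set $\F_{r,s}^d$. For each concrete tuple $(v_1,\ldots,v_d)\in\F_{r,s}^d$ the quantities $f_1(v_1,\ldots,v_d)$ and $f_2(v_1,\ldots,v_d)$ are fixed elements, so whether they are equal can be decided by direct computation, with nothing to quantify over. I would then write the target predicate as the finite disjunction over exactly those tuples on which the two functions agree:
\[
\bigvee_{\substack{(v_1,\ldots,v_d)\in\F_{r,s}^d \\ f_1(v_1,\ldots,v_d)=f_2(v_1,\ldots,v_d)}}\; \bigwedge_{k=1}^{d} F^k_{v_k}(i).
\]
Since $\F_{r,s}^d$ is finite, this is a genuine finite Boolean combination of the given atomic predicates $F^k_m(i)$, and hence a legitimate formula.

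Correctness then follows from the case-split property: at position $i$ exactly one inner conjunction $\bigwedge_k F^k_{v_k}(i)$ is satisfied, namely the one indexed by the actual values $(x_1,\ldots,x_d)$, and by construction this conjunction appears as a disjunct if and only if $f_1$ and $f_2$ agree on $(x_1,\ldots,x_d)$. Thus the formula holds at $i$ exactly when $f_1(x_1,\ldots,x_d)=f_2(x_1,\ldots,x_d)$, as required.

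I do not expect a substantive obstacle here; the entire content is the finiteness of $\F_{r,s}$, which collapses the equation into a lookup table. The only point requiring care is the completeness and exclusivity of $\{F^k_m(i)\}_m$ — that the value in each dimension is always well-defined and lies in $\F_{r,s}$ — which is guaranteed by the fixed-precision semantics. The same enumeration scheme immediately adapts to any other comparison between $f_1$ and $f_2$ (not just equality), simply by changing the condition that selects which tuples enter the disjunction.
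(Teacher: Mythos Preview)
Your proposal is correct and matches the paper's own proof essentially line for line: both enumerate the finite set of tuples in $\F_{r,s}^d$ on which $f_1$ and $f_2$ agree, and write the desired predicate as the disjunction over those tuples of the conjunction $\bigwedge_{k} F^k_{v_k}(i)$. Your write-up is in fact slightly more careful than the paper's, since you explicitly articulate the completeness and mutual exclusivity of the predicates $\{F^k_m(i)\}_m$ that justifies the correctness direction.
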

\begin{proof}
    Essentially, this is $\F^n\times\F^n\to \{0,1\}$, which only takes on finitely many values. Thus it is tedious, but straightforward, to enumerate all tuples of $x$ which should return true, and write a formula that returns the correct answer. Let \[\mathcal{K}=\{(m_1,\ldots,m_d)\mid f_1(F_{m_1},\ldots,F_{m_d})=f_2(F_{m_1},\ldots,F_{k=m_d})\}\]

    That is, $\mathcal{K}$ stores all $n$-tuples $K$ of indices such that given an $n$-tuple of fixed-precision numbers which each have the corresponding index in $K$, the equality holds. Then write

    \[\bigvee_{K\in\mathcal{K}} \left(\bigwedge_{k\leq d} F_{K_k}^k(i)\right)\]

    This formula will return $1$ iff $x_k$ in dimensions $k$ at position $i$ have the correct fixed-precision values in order to satisfy the equality.
\end{proof}

As a result, for any function $f\colon\F_{r,s}^{d}\to \F_{r,s}^{d}$ we can write formulas $\phi(i)$ to check which fixed-precision value the output of the function is at position $i$, given the formulas that check the values of the inputs at position $i$. This means

\begin{lemma}
    We can write formulas $\mathit{FFN}_m^k(i)$, for each feed-forward layer $\mathit{FFN}$ to check whether the output at position $i$ in dimension $k$ is $F_m\in\F_{r,s}$. Same for LayerNorm $\mathit{LN}_m^k(i)$.
\end{lemma}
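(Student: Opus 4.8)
The plan is to recognize that this lemma is essentially a position-wise corollary of \cref{prop:fpvalue}. The crucial structural fact is that both the feed-forward network $\mathit{FFN}$ and LayerNorm are applied independently at each position. Hence, at a fixed position $i$, the output value in dimension $k$ is a deterministic function of only the $d$ input values $x_1,\ldots,x_d$ at that same position, i.e.\ a function $f_k\colon \F_{r,s}^d \to \F_{r,s}$. Since the domain $\F_{r,s}^d$ is finite, this is exactly the setting that \cref{prop:fpvalue} and the generalization stated immediately after it are designed to handle.

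First I would make precise that the fixed-precision $\mathit{FFN}$ and $\mathit{LN}$ really are functions into $\F_{r,s}$. For the feed-forward layer this is immediate once a rounding rule for the linear maps and for $\ReLU$ is fixed. For LayerNorm the same holds, provided we adopt a convention for the (rare) case in which the fixed-precision standard deviation $\sigma$ evaluates to $0$ — for example the additive-$\epsilon$ stabilizer used in real implementations — so that the division is always defined and the output again lands in $\F_{r,s}$. With any such deterministic convention, LayerNorm is a genuine position-wise map $\F_{r,s}^d \to \F_{r,s}^d$, including its square-root step, since a finite domain forces a finite, enumerable input-output table.

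Next, assuming by the inductive hypothesis of \cref{thm:transformers-to-KC} that the input values feeding this layer are already described by formulas $G^{k'}_{m'}(i)$ (true iff the input in dimension $k'$ at position $i$ equals $F_{m'}$), I would invoke \cref{prop:fpvalue}. For each target value $F_m$ and output dimension $k$, let $\mathcal{K}_m^k$ be the finite set of input tuples $K=(m_1,\ldots,m_d)$ such that $f_k(F_{m_1},\ldots,F_{m_d}) = F_m$, and define
\[
  \mathit{FFN}_m^k(i) \;:=\; \bigvee_{K\in\mathcal{K}_m^k}\ \bigwedge_{k'\le d} G_{m_{k'}}^{k'}(i).
\]
This formula holds at $i$ exactly when the input tuple at $i$ is one that the layer maps to $F_m$ in dimension $k$, which is the claimed property. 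The formula $\mathit{LN}_m^k(i)$ is produced identically, with $f_k$ now the position-wise LayerNorm map, and the $\BOS$ position is treated the same way using the corresponding $\beta$ predicates.

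The main obstacle is not the logical construction, which is a finite enumeration licensed directly by \cref{prop:fpvalue}, but rather the bookkeeping needed to (i) pin down a well-defined fixed-precision semantics for LayerNorm — especially the $\sigma = 0$ edge case and the rounding of the square root — so that it is a genuine function, and (ii) confirm that this lemma slots correctly into the outer induction on transformer layers, so that the formulas $G^{k'}_{m'}$ supplied as inputs are precisely those produced for the previous layer's outputs. Once those points are settled, the rest is routine.
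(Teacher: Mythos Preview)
Your proposal is correct and takes essentially the same approach as the paper: both argue that the lemma is a direct consequence of \cref{prop:fpvalue}, since $\mathit{FFN}$ and LayerNorm are position-wise maps with finite fixed-precision domain and codomain. The paper's proof is a single sentence to this effect, whereas you spell out the explicit disjunction-of-conjunctions formula and discuss edge cases like the $\sigma=0$ behavior of LayerNorm, but the underlying idea is identical.
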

\begin{proof}
    This is a direct consequence of \cref{prop:fpvalue} because these functions all take a finite number of fixed-precision inputs, and have a finite number of outputs. 
\end{proof}

The same is not the case for self-attention layers, as we have no bound on the input length. However, count terms help us out here.

\begin{lemma}\label{lem:fpsum}
    Assume we are using $\F_{r,s}$ as our fixed-precision representation. Assume we have access to predicates $F^b(i)$ which tell us if the value at position $i$ has a $1$ in the $b$-th bit of its fixed-precision representation. Then we can compute the following summation as a counting term

    \[2^s\cdot \sum_{j\leq i} X_j\]

    where $X_j$ is a value at position $j$.
\end{lemma}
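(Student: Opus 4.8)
The plan is to decompose each fixed-precision value into its two's-complement bits, push the outer summation inside, and observe that the sum over $j\le i$ of a single bit is exactly a $\KC$ counting term. Concretely, for $a\in\F_{r,s}$ the two's-complement identity (with bit positions indexed as in the definition of $\langle\cdot\rangle_b$) gives $a = \sum_{b} 2^b\langle a\rangle_b$, where $b$ ranges over the finite set of bit positions of $\F_{r,s}$ and the most significant (sign) bit enters with a negative weight. Multiplying through by $2^s$ clears all denominators, so that $2^s a$ becomes an integer linear combination of the bits $\langle a\rangle_b$ with nonnegative integer coefficients $2^{b+s}$, together with a single negative coefficient $-2^{r+s}$ for the sign bit.

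Next I would substitute this expansion into the target sum $2^s\sum_{j\le i}X_j$ and swap the order of summation. For each fixed bit position $b$, the inner sum $\sum_{j\le i}\langle X_j\rangle_b$ counts precisely those positions $j\le i$ at which the $b$-th bit is set, i.e.\ where $F^b(j)$ holds; by the definition of the counting operator this is exactly $\sinceC{F^b}^{w,i}$. Hence
\[
2^s\sum_{j\le i} X_j \;=\; \sum_{b=-s}^{r-1} 2^{b+s}\,\sinceC{F^b} \;-\; 2^{r+s}\,\sinceC{F^r},
\]
where $F^r$ is the predicate detecting the sign bit.

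Finally I would argue that the right-hand side is a legitimate $\KC$ count term. Each coefficient $2^{b+s}$ is a natural number (since $b\ge -s$), so $2^{b+s}\sinceC{F^b}$ is obtained by repeatedly adding the count term $\sinceC{F^b}$ to itself using the $+$ operation, while the sign-bit contribution is handled by the $-$ operation. Summing these finitely many terms yields a single count term whose interpretation at $(w,i)$ equals $2^s\sum_{j\le i}X_j$, as required. A short verification that the resulting integer matches the claimed value completes the argument.

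The main obstacle is getting the two's-complement bookkeeping right: the sign bit carries a negative weight, so one must check that after scaling by $2^s$ every weight is an integer and that the lone negative coefficient can be absorbed by $\KC$'s subtraction on count terms. The only hypotheses genuinely needed are that $\F_{r,s}$ has finitely many bit positions (so the sum over $b$ is finite and the coefficients are bounded) and that each bit is accessible through a predicate $F^b$; the former is immediate from the definition of $\F_{r,s}$ and the latter is given by assumption.
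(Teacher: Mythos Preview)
Your proposal is correct and follows essentially the same approach as the paper: expand each $X_j$ into its two's-complement bits, swap the order of summation so each inner sum becomes a $\sinceC{F^b}$ term, and assemble the result as an integer linear combination of counting terms using $+$ and $-$. The only cosmetic difference is that the paper indexes the bits of the integer $2^sX_j$ directly as $F^0,\ldots,F^{r+s}$ (so the coefficients are $2^0,\ldots,2^{r+s-1},-2^{r+s}$ without any further rescaling), whereas you index bits of $X_j$ at positions $-s,\ldots,r$ and then multiply through by $2^s$; these yield the same term.
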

\begin{proof}
    Observe this summation is equivalent to 

    \begin{align*}
        \sum_{j\leq i} X_j &= 2^0\cdot\sum_{j\leq i} F^0(j) + 2^1\cdot\sum_{j\leq i} F^1(j)\ldots + 2^{r+s-1} \cdot \sum_{j\leq i} F^{r+s-1}(j)-2^{r+s}\cdot \sum_{j\leq i} F^{r+s}(j)\\
        &= \sinceC{F^{\smash 0}}+2\sinceC{F^{\smash 1}}+\ldots+2^{r+s-1}\sinceC{F^{r+s-1}}-2^{r+s}\sinceC{F^{r+s}}. \tag*{\qedhere}
    \end{align*}

    As a clarifying note, recall we are using two's complement, which is why the most significant bit is subtracted.
\end{proof}

\begin{lemma}
    We can write formulas $C^k_m(i)$ which are true iff the output of a self-attention layer at dimension $k$ in position $i$ is $F_m\in \F_{r,s}$
\end{lemma}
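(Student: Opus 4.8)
The plan is to reduce the output of a self-attention layer to a ratio of two counting terms, so that testing whether this output equals a given $F_m\in\F_{r,s}$ becomes a comparison between counting terms, which is a primitive of $\KC$. The only genuinely unbounded operation in a self-attention layer is the aggregation over positions $j\le i$; every other step is position-wise and finite-valued, hence already handled by \cref{prop:fpvalue} and the feed-forward/LayerNorm lemmas. The difficulty is that the attention weight $\exp(s_{ij})$ couples positions $i$ and $j$ through the score $s_{ij}=W^{(Q)}A_{*,i}\cdot W^{(K)}A_{*,j}/\sqrt d$, so the summand is not a function of $j$ alone and \cref{lem:fpsum} cannot be applied directly.

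First I would eliminate this coupling by conditioning on the query. Since every activation is a fixed-precision number from a finite set, the query vector $q_i=W^{(Q)}A_{*,i}$ takes only finitely many values $q^{(1)},\dots,q^{(M)}$, and by \cref{prop:fpvalue} there is a formula $\Theta_\ell(i)$ true exactly when $q_i=q^{(\ell)}$. With $q_i$ fixed to $q^{(\ell)}$, the score $s_{ij}=q^{(\ell)}\cdot W^{(K)}A_{*,j}/\sqrt d$, the weight $\exp(s_{ij})$, the value $v_j=W^{(V)}A_{*,j}$, and the fixed-precision products $\exp(s_{ij})\,v_{j,k}$ all become functions of the activations at position $j$ alone. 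Hence \cref{prop:fpvalue} again yields, for each $\ell$, bit predicates for the per-position numerator contribution (in each dimension $k$) and for the denominator contribution.

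Next I would apply \cref{lem:fpsum} to each of these per-position contributions to obtain count terms $N^{(\ell)}_k(i)$ and $D^{(\ell)}(i)$ equal to $2^s$ times the numerator and denominator sums $\sum_{j\le i}\exp(s_{ij})v_{j,k}$ and $\sum_{j\le i}\exp(s_{ij})$. The $\BOS$ column contributes a single fixed-precision value, determined by the $\beta$ predicates together with $q^{(\ell)}$, which I add as a constant count term. The common factor $2^s$ cancels in the ratio, and the layer output in dimension $k$, conditioned on $\Theta_\ell(i)$, is the fixed-precision rounding of $N^{(\ell)}_k(i)/D^{(\ell)}(i)$.

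Finally -- and this is the crux -- I would express ``$N^{(\ell)}_k(i)/D^{(\ell)}(i)$ rounds to $F_m$'' without ever dividing. Since $D^{(\ell)}(i)>0$, this event is the intersection of two inequalities $(F_m-\delta)\,D^{(\ell)}(i)\le N^{(\ell)}_k(i)<(F_m+\delta)\,D^{(\ell)}(i)$, where $\delta=2^{-s-1}$ is the half-ulp fixed by the rounding convention; clearing the fixed rational coefficients by integer multiplication (repeated addition of count terms) turns each into a comparison $C_1\le C_2$ of count terms, which $\KC$ expresses directly. Setting $C^k_m(i):=\bigvee_{\ell}\big(\Theta_\ell(i)\wedge(\text{the two comparisons for }q^{(\ell)})\big)$ then gives the desired formula. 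The main obstacle is exactly this normalization step: the denominator is an unbounded sum, so the quotient cannot be read off a finite table as in \cref{prop:fpvalue}; the resolution is to avoid computing it and instead test it by comparing count terms, which is precisely where the counting power of $\KC$ is indispensable.
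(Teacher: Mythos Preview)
Your proposal is correct and follows essentially the same strategy as the paper: condition on the finitely many query values to decouple $i$ from $j$, apply \cref{lem:fpsum} to turn the resulting per-position sums into count terms, and replace the division by a pair of count-term comparisons. The only cosmetic difference is that the paper folds the constant $F_m$ (and $F_m+2^{-s}$, using floor rounding rather than round-to-nearest) into the per-position summand before invoking \cref{lem:fpsum}, whereas you compute numerator and denominator separately and then clear the rational coefficients; both achieve the same end.
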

\begin{proof}
    Recall the definition of a self-attention layer in a fixed-precision masked \SAT{}, $\mathit{SA} \colon \F_{r,s}^{d\times n}\rightarrow \F_{r,s}^{d\times n}$:
    
    \begin{align*}
        \mathit{SA}(A) &= \begin{bmatrix} c_o &\cdots &c_n\end{bmatrix} \text{ where}\\
        W^{(Q)} \colon \F_{r,s}^d&\to\F_{r,s}^{d_k}\\
        W^{(K)} \colon \F_{r,s}^d &\to\F_{r,s}^{d_k}\\
        W^{(V)} \colon \F_{r,s}^d &\to\F_{r,s}^d\\
        s_{ij} &= \frac{W^{(Q)}A_{*,i}\cdot W^{(K)}A_{*,j}}{\sqrt{d}} \\ %
        c_i &= \frac{\sum_{j=0}^i \text{exp}(s_{ij})W^{(V)}A_{*,j}}{\sum_{j=0}^i \text{exp}(s_{ij})}
    \end{align*}
    
    More specifically, we want to define a formula $c_m^k(k)$ such that
    
    \[c_m^k(i)\iff F_m\leq \frac{\sum_{j\leq i} e^{Q_i\cdot K_j} V_j^k}{\sum_{j\leq i}e^{Q_i\cdot K_j}} < F_m+2^{-s}\]
    
    where $V_j^k$ is the $k$-th component of the $W^{(V)}A_{*,j}$. The bounds are because division in $\F_{r,s}$ must perform some sort of rounding to the nearest number.
    We rearrange the equation to the following:
    
    \[\phi^k_m(i) \iff \sum_{j\leq i}\left( F_m\cdot e^{Q_i\cdot K_j}\right)\leq \sum_{j\leq i} \left(e^{Q_i\cdot K_j} V_j^k\right) < \sum_{j\leq i}\left((F_m+2^{-s})\cdot  e^{Q_i\cdot K_j}\right) .\]

    Now observe that we can enumerate all the (finitely many) values that the $\F^d$ vector $Q_i$ as $(Q_x)_{x\leq(d(r+s))}$ in order to get an expression that only depends on $j$:
    
    \[c^k_m(i) \iff \begin{cases}
        \left(\sum_{j\leq i}\left( F_m\cdot e^{Q_1\cdot K_j}\right)\leq \sum_{j\leq i} \left(e^{Q_1\cdot K_j} V_j^k\right)\right)& Q_i=Q_1\\
        \land \left(\sum_{j\leq i} \left(e^{Q_1\cdot K_j} V_j^k\right)<\sum_{j\leq i}\left( (F_m+2^{-s})\cdot e^{Q_1\cdot K_j}\right)\right)&\\[3ex]
        \sum_{j\leq i}\left( F_m\cdot e^{Q_2\cdot K_j}\right)\leq \sum_{j\leq i} \left(e^{Q_2\cdot K_j} V_j^k\right)& Q_i=Q_2\\
        \land \left(\sum_{j\leq i} \left(e^{Q_2\cdot K_j} V_j^k\right)<\sum_{j\leq i}\left( (F_m+2^{-s})\cdot e^{Q_2\cdot K_j}\right)\right)&\\
        \quad\quad\vdots\quad\quad\quad\quad\quad\quad\quad\quad\quad\quad\vdots & \quad\;\;\;\vdots\\\\
        \sum_{j\leq i}\left( F_m\cdot e^{Q_{d(r+s)}\cdot K_j}\right)\leq \sum_{j\leq i} \left(e^{Q_{d(r+s)}\cdot K_j} V_j^k\right)& Q_i=Q_{d(r+s)}\\
        \land \left(\sum_{j\leq i} \left(e^{Q_{d(r+s)}\cdot K_j} V_j^k\right)<\sum_{j\leq i}\left( (F_m+2^{-s})\cdot e^{Q_{d(r+s)}\cdot K_j}\right)\right)&\\
    \end{cases}\]
    
    This can straightforwardly be translated into a Boolean combination of formulas $c^k_{m,{Q_x}(i)}$, etc., which by \cref{prop:fpvalue} are definable. Furthermore, by the same proposition it is also straightforward to construct a formula $QKV(j)$ that specifies the value of $e^{Q_{x}\cdot K_j} V_j^k$ at position $j$, as well as a formula $F_mQK(j)$ that specifies the value of $F_m\cdot e^{Q_{pd}\cdot K_j}$. This allows us to write the above expression in a form such that \cref{lem:fpsum} can be applied directly to both sides of the equation, thus computing both as count terms and allowing their comparison within $\KCmod$. Notice that by applying \cref{lem:fpsum} we are implicitly scaling both sides up by $2^s$, but this is fine, as equality would still be preserved under this scaling factor.

\end{proof}

Finally, to complete the simulation it remains to show that we can write formulas that simulate the word embedding and positional encoding, which will be similar to the construction of \citet{chiang-2023-tighter}.

\begin{lemma}
    We can write $\KC$ formulas $WE_m^k(i)$ that check that the $k$-th dimension of the embedding at position $i$ has the value $F_m\in \F_{r,s}$
\end{lemma}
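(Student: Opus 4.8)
The plan is to treat this as the base case of the induction behind \cref{thm:transformers-to-KC}, and it is by far the easiest case: the word embedding $\mathit{WE}\colon \Sigma\cup\{\BOS\}\to \F_{r,s}^{d}$ is applied position-wise, so the value in coordinate $k$ at a position depends only on the symbol sitting there, and the symbol alphabet is finite. Fix a dimension $k$ and a target value $F_m\in\F_{r,s}$. For the positions of $w$ itself, recall that position $i$ of $w$ is stored at index $i+1$ of $\BOS\cdot w$ and that $w,i\vDash Q_a$ holds exactly when $w_i=a$. I would therefore simply set
\[
  WE_m^k := \bigvee_{a\in\Sigma \,:\, \mathit{WE}(a)_k = F_m} Q_a,
\]
a finite disjunction of atomic predicates (the empty disjunction being $\lnot\top$), which is a legal $\KC$ formula. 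The equivalence $[\mathit{WE}(\BOS\cdot w)]_{k,i+1}=F_m \iff w,i\vDash WE_m^k$ is then immediate from the semantics of $Q_a$. For the $\BOS$ position the embedded symbol is fixed, so the corresponding predicate $\beta_m^k$ is the constant $\top$ when $\mathit{WE}(\BOS)_k = F_m$ and $\lnot\top$ otherwise.

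To finish the $\KCmod$ half of \cref{thm:transformers-to-KC}, the companion step is to handle the sinusoidal positional encoding, whose coordinate-$k$ value at position $i$ genuinely depends on $i$ rather than on $w_i$; this is exactly where the plain $\KC$ formulas above no longer suffice and the modular predicates of $\KCmod$ are needed. Writing $g_k\colon\N\to\F_{r,s}$ for the fixed-precision value of the encoding in coordinate $k$ as a function of the position, the plan is to argue that $g_k$ is periodic in $i$ with some period $p_k$. Granting this, the residue set $R_{m,k}=\{\,\rho\in\{0,\ldots,p_k-1\} : g_k(\rho)=F_m\,\}$ is well defined, and I would set the encoding formula to
\[
  PE_m^k := \bigvee_{\rho\in R_{m,k}} \mathsf{MOD}_{p_k}^{\,\rho},
\]
using $(w,i)\vDash\mathsf{MOD}_{p_k}^{\,\rho} \iff i\equiv \rho \pmod{p_k}$. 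Combining $WE_m^k$ and $PE_m^k$ through the fixed-precision sum $\mathit{WE}(w_i)+\mathit{PE}(i)$ is then just another finite case analysis, already packaged by \cref{prop:fpvalue}.

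The genuinely substantive point, and the one I expect to be the main obstacle, is the periodicity of $g_k$. The justification I would give is that in fixed precision every arithmetic operation is two's-complement and hence wraps around modulo a power of two, so the accumulated angle feeding each $\sin$/$\cos$ cycles through a finite additive coset as $i$ increases; composing with the (finite-domain, finite-range) fixed-precision $\sin$/$\cos$ tables then yields a purely periodic sequence, with $p_k$ bounded by the fixed-precision modulus. If one only obtains eventual periodicity, the finite non-periodic prefix causes no trouble, since $\KC$ can isolate any fixed small position via the counting term $\sinceC{\top}=c$, so the prefix is handled by explicit position tests and the tail by the $\mathsf{MOD}$ disjunction. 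This mirrors the positional-encoding treatment of \citet{chiang-2023-tighter}; once the periods $p_k$ are fixed, everything reduces to the finite enumeration already available from \cref{prop:fpvalue}.
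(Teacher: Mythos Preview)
Your proposal is correct and takes essentially the same approach as the paper: the paper defines $WE_m^k(i)=\bigvee_{a\,:\,WE(a)^k=F_m} Q_a(i)$, handles $\BOS$ via a separate constant predicate $\beta_m^k$, and then remarks that sinusoidal positional encodings ``can be described in the same manner, using modular predicates.'' Your treatment of the positional-encoding periodicity is considerably more fleshed out than the paper's one-line gesture, but the underlying idea (finite range in fixed precision forces periodicity, then disjoin $\mathsf{MOD}$ predicates) is the same.
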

\begin{proof}

    For $a\in \Sigma\cup\{\BOS\}$, let $WE(a)=[F_{m_1,a},\ldots,F_{m_d,a}]$ denote the fixed-precision word embedding for $a$. Then, using the same notation as earlier, we can write

    \[WE_m^k(i)= \bigvee_{a\in \{a \mid WE(a)^k=F_m\}} Q_a(i).\]

    Similarly, we define $\beta_m^k(i)$, which checks the word embedding for $\BOS$ has value $F_m$ at dimension $k$.
    Sinusoidal positional encodings can be described in the same manner, using modular predicates. 
\end{proof}
Finally,

\quad

\begin{proof}
    As a direct consequence of the above lemmas, we can write a formula of $\KC$, which simulates a fixed-precision masked \SAT{} without positional encodings. Adding modular predicates allows $\KCmod$ to simulate fixed-precision masked \SAT{} with sinusoidal positional encodings. This formula will have modal depth $O(L)$ in the number of layers $L$ of the transformer, and width roughly $O((2^F)^d)$ in the precision $F$ and width $d$ of the transformer.
\end{proof}

\end{document}